\newcommand{\argemp}[2]{\if&#1&\else#2\fi}
\newcommand{\argdef}[2]{\if&#1&#2\else#1\fi}
\newcommand{\argint}[3]{\if&#2&\else#1#2#3\fi}
\newcommand{\argext}[3]{\if&#1&#3\else#1\if&#3&\else#2#3\fi\fi}
\newcommandx{\mthfnt}[3][1=, 2=0]{{
	\IfStrEqCase{#1}
	{%
		{}%
		{#3}%
		{Name}%
		{%
			\IfStrEqCase{#2}
			{%
				{0}{\mathcal{#3}}%
				{1}{\mathscr{#3}}%
				{2}{\mathfrak{#3}}%
				{3}{\mathbb{#3}}%
			}
			[\ensuremath{\clubsuit}]%
		}%
		{Set}%
		{%
			\IfStrEqCase{#2}
			{%
				{0}{\mathrm{#3}}%
				{1}{\mathsf{#3}}%
				{2}{\mathbb{#3}}%
				{3}{\mathbf{#3}}%
			}
			[\ensuremath{\clubsuit}]%
		}%
		{Fun}%
		{%
			\IfStrEqCase{#2}
			{%
				{0}{\mathsf{#3}}%
				{1}{\mathrm{#3}}%
			}
			[\ensuremath{\clubsuit}]%
		}%
		{Rel}%
		{%
			\IfStrEqCase{#2}
			{%
				{0}{\mathit{#3}}%
				{1}{\mathtt{#3}}%
			}
			[\ensuremath{\clubsuit}]%
		}%
		{Sym}%
		{%
			\IfStrEqCase{#2}
			{%
				{0}{\mathtt{#3}}%
				{1}{\mathbf{#3}}%
			}
			[\ensuremath{\clubsuit}]%
		}%
		{Elm}%
		{\mathnormal{#3}}
	}
[\ensuremath{\clubsuit}]%
}}
\newcommand{\mthsub}[1]{\argemp{#1}{\ensuremath{_{\mathnormal{#1}}}}}
\newcommand{\mthsup}[1]{\argemp{#1}{\ensuremath{^{\mathnormal{#1}}}}}
\newcommandx{\mth}[5][1=, 2=0, 4=, 5=]{{\ensuremath{\mthfnt[#1][#2]{#3}\mthsub{#4}\mthsup{#5}}}}
\newcommandx{\mtharg}[6][1=, 2=0, 4=, 5=]{{\mth[#1][#2]{#3}[#4][#5]\ensuremath{\argint{(}{#6}{)}}}}
\newcommand{\mthempty}{\mth[][]}
\newcommand{\mthstyname}{0}
\newcommand{\mthname}[1][]{\mth[Name][\argdef{#1}{\mthstyname}]}
\newcommand{\mthstyset}{0}
\newcommand{\mthset}[1][]{\mth[Set][\argdef{#1}{\mthstyset}]}
\newcommand{\mthstyfun}{0}
\newcommand{\mthfun}[1][]{\mth[Fun][\argdef{#1}{\mthstyfun}]}
\newcommand{\mthstysym}{0}
\newcommand{\mthsym}[1][]{\mth[Sym][\argdef{#1}{\mthstysym}]}
\newcommand{\mthstyelm}{0}
\newcommand{\mthelm}[1][]{\mth[Elm][\argdef{#1}{\mthstyelm}]}
\newcommand{\tuple}[1]
{\ensuremath{\!\argint{\langle}{#1}{\rangle}}}
\newcommand{\ignore}[1]{}
\newcommand{\SetN}{\mathbb{N}}
\newcommand{\SetZ}{\mathbb{Z}}
\newcommand{\SetR}{\mathbb{R}}
\newcommand{\set}[2]{\ensuremath{\argint{\{}{\argext{#1}{\allowbreak:\allowbreak}{#2}}{\}}}}
\newcommand{\card}[1]{\mthempty{\argint{\vert}{#1}{\vert}}}
\newcommand{\Trans}{\mthname{T}}
\newcommand{\Kripke}{\mathcal{K}}
\newcommand{\LTL}{\mthfun{LTL}\xspace}
\newcommand{\GRone}{\mthfun{GR(1)}\xspace}
\def\TEMPORAL#1{\mbox{\small\boldmath$\mathbf{#1}$}}
\def\ltlnext{\TEMPORAL{X}}
\def\sometime{\TEMPORAL{F}} 
\def\always{\TEMPORAL{G}}
\def\until{\,\TEMPORAL{U}\,}
\def\Nat{\mathbb{N}}
\newcommand{\Ag}{\mthset{N}}
\newcommand{\Ac}{\mthset{Ac}}
\newcommand{\AcProf}{\vec{\Ac}}
\newcommand{\St}{\mthset{St}}
\newcommand{\AP}{\mthset{AP}}
\newcommand{\Pun}{\text{Pun}}
\renewcommand{\Game}{\mthname{G}}
\renewcommand{\Pun}{\mthset{Pun}}
\newcommand{\pun}{\mthsym{pun}}
\newcommand{\Automaton}{\mthname{A}}
\newcommand{\labFun}{\lambda}
\newcommand{\WinSet}{\mthset{Win}}
\newcommand{\LoseSet}{\mthset{Lose}}
\newcommand{\trnFun}{\mthfun{tr}}
\newcommand{\act}{\mthsym{a}}
\newcommand{\jact}{\vec{\mthsym{a}}}
\newcommand{\StrSet}{\mthset{Str}}
\newcommand{\strElm}{\sigma}
\newcommand{\strpElm}{\mthelm{\vec{\sigma}}}
\newcommand{\NE}{\mthset{NE}}
\newcommand{\winsym}{\mthset{Win}}
\newcommandx{\Win}[3][1=, 2=, 3=]
{\mthset{\winsym#3}[#1][#2]}
\newcommand{\presym}{\mthfun{Pre}}
\newcommandx{\Pre}[3][1=, 2=, 3=]
{\mthset{\presym#3}[#1][#2]}
\newcommand{\eqsym}{\mthfun{Eq}}
\newcommandx{\Eq}[3][1=, 2=, 3=]
{\mthset{\eqsym#3}[#1][#2]}
\newcommandx{\AFW}[5][1=, 2=, 3=, 4=, 5=]
{\txtargname{AFW#5{\small\argint{$[$}{#1}{$]$}}}[#2][#3]{#4}\xspace}
\newcommand{\MP}[1][]{%
	\ifthenelse{\equal{#1}{}}{{\small{\sf mp}}}{{\small{\sf mp}}(#1)}%
\xspace}
\providecommand{\strFun}[1][]{\mthfun{\sigma}}
\providecommand{\pstrFun}[1][]{\mthfun{\sSym}}
\newcommand{\enash}{\mthfun{E\mbox{-}Nash}\xspace}
\newcommand{\anash}{\mthfun{A\mbox{-}Nash}\xspace}
\newcommand{\rvnonemptiness}{\mthfun{Non\mbox{-}emptiness}\xspace}
\newcommand{\wFun}{\mthfun{w}}
\newcommand{\CTL}{\mthfun{CTL}\xspace}
\newcommand{\CTLstar}{\mthfun{CTL^{\ast}}\xspace}
\newcommand{\LTLlim}{\mthfun{LTL^{Lim}}\xspace}
\def\avg{{\sf avg}}
\def\src{{\sf src}}
\def\trg{{\sf trg}}
\def\IN{{\sf in}}
\def\OUT{{\sf out}}
\newcommand{\pay}{\mthfun{pay}}
\newcommand{\sw}{\mthfun{sw}\xspace}
\newcommand{\esw}{\mthfun{esw}\xspace}
\newcommand{\usw}{\mthfun{usw}\xspace}
\newcommand{\MaxENash}{\mthfun{MaxNE}\xspace}
\newcommand{\MinENash}{\mthfun{MinNE}\xspace}
\newcommand{\LP}{\mthfun{LP}}
\newcommand{\exptime}{\mthfun{EXPTIME}\xspace}
\newcommand{\exptimeC}{\mthfun{EXPTIME}-complete\xspace}
\def\pspace{\mthfun{PSPACE}\xspace}
\def\pspaceC{\mthfun{PSPACE}-complete\xspace}
\def\np{\mthfun{NP}\xspace}
\def\npC{\mthfun{NP}-complete\xspace}
\def\fpt{\mthfun{FPT}\xspace}
\tikzstyle{every node} =
\tikzstyle{every edge} +=
\tikzstyle{noall} =
\tikzstyle{nodraw} =
\tikzstyle{nofill} =
\tikzstyle{cnode} =
\tikzstyle{snode} =
\tikzstyle{lnode} =
\tikzstyle{pnode} =
\tikzstyle{rnode} =
\begin{document}

\title{On the Complexity of Rational Verification
}


\author{
	Julian Gutierrez\and
	Muhammad Najib\and
	Giuseppe Perelli\and
	Michael Wooldridge
}


\institute{
	J.\ Gutierrez \at
	Faculty of Information Technology, Monash University, Australia. \\
	\email{julian.gutierrez@monash.edu} \and
	M.\ Najib \at
	Department of Computer Science, University of Kaiserslautern, Germany. \\
	\email{najib@cs.uni-kl.de} \and	G.\ Perelli \at
	Department of Computer Science, Sapienza University of Rome, Italy. \\
	\email{perelli@di.uniroma1.it} \and
	M.\ Wooldridge \at
	Department of Computer Science, University of Oxford, United Kingdom. \\
	\email{michael.wooldridge@cs.ox.ac.uk}
}

\date{Received: date / Accepted: date}

\maketitle

\begin{abstract}
	\emph{Rational verification} refers to the problem of checking which
  temporal logic properties hold of a concurrent/multiagent system,
  under the assumption that agents in the system choose strategies
  that form a game theoretic equilibrium. Rational verification can be
  understood as a counterpart to model checking for multiagent
  systems, but while classical model checking can be done in
  polynomial time for some temporal logic specification languages such
  as \CTL, and polynomial space with \LTL\ specifications, rational
  verification is much harder: the key decision problems for rational verification
  are 2\exptimeC with \LTL specifications,
  even when using explicit-state system representations. Against this
  background, our contributions in this paper are threefold. First, we
  show that the complexity of rational verification can be greatly
  reduced by restricting specifications to \GRone, a fragment of \LTL
  that can represent a broad and practically useful class of response
  properties of reactive systems. In particular, we show that for a number of
  relevant settings, rational verification can be done in polynomial
  space and even in polynomial time. Second, we provide improved
  complexity results for rational verification when considering
  players' goals given by \emph{mean-payoff} utility functions---arguably the
  most widely used approach for quantitative objectives in concurrent and
  multiagent systems. Finally, we consider the problem of computing outcomes that satisfy social
  welfare constraints. To this end, we consider both utilitarian and egalitarian
  social welfare and show that computing such outcomes is either \pspace-complete or \np-complete.
\end{abstract}

\keywords{
	Temporal logic;
	Game theory;
	Rational verification;
	Multi-agent systems}

\section{Introduction}
\label{sec:introduction}

The formal verification of computer systems has been a major research
area in computer science for the past 60 years.  Verification is the
problem of checking program correctness: the key decision problem
relating to verification is that of establishing whether or not a
given system $P$ satisfies a given specification. The most successful
contemporary approach to formal verification is model checking, in
which an abstract, finite state model of the system of interest $P$ is
represented as a Kripke structure $K_P$ (a labelled transition
system), and the specification is represented as a temporal logic
formula $\varphi$, the models of which are intended to correspond to
``correct'' behaviours of the system~\cite{emerson:90a}. The
verification process then reduces to establishing whether the
specification formula $\varphi$ is satisfied in the Kripke structure
$K_P$ (notation: $K_P\models\varphi$), a process that can be
efficiently automated in many settings of interest~\cite{CGP02}. For
example, model checking Linear Temporal Logic (\LTL) specifications
can be done in polynomial space, and for specifications in
Computation Tree Logic (\CTL) it can be done in polynomial
time~\cite{clarke:2018a}.

In the context of multiagent systems, \emph{rational verification}
forms a natural counterpart of model
checking~\cite{GutierrezHW15,WooldridgeGHMPT16,GutierrezHW17-aij}.
This is the problem of checking whether a given property $\varphi$,
expressed as a temporal logic formula, is satisfied in a computation
of a system that might be generated \emph{if agents within the system
  choose strategies for selecting actions that form a game-theoretic
  equilibrium}. This game theoretic aspect of rational verification
adds a new ingredient to the verification problem, as it becomes
necessary to take into account the {\em preferences} of players with
respect to the possible runs of the system.  Typically, in rational
verification, such preferences are given by associating an \LTL\ goal
$\gamma_i$ with each player~$i$ in the game: player $i$ prefers all
those runs of the system that satisfy $\gamma_i$ over those that do not,
is indifferent between all those runs that satisfy $\gamma_i$, and is
similarly indifferent between those runs that do not satisfy $\gamma_i$.
In this setting, rational verification with respect to a specification
$\varphi$ is 2\exptimeC, regardless of whether the representation of the
system is given succinctly~\cite{GutierrezHW17-aij,GutierrezHW15} or
explicitly simply as a finite-state labelled transition
graph~\cite{GHW15}.  This high computational complexity represents a
key barrier to the wider take-up of rational verification.

Our aim in this work is to improve this state of affairs: we present a range of
settings for which we are able to give complexity results that greatly
improve on the 2\exptimeC result of the general \LTL\ case.  We first
consider games where the goals of players are represented as {\em
  \GRone formulae}. \GRone\ is an important fragment of \LTL that can
express a wide range of practically useful response properties of
concurrent and reactive systems~\cite{BJPPS12}. We then consider {\em
  mean-payoff utility functions}: one of the most studied reward and
quality measures used in games for automated formal verification.  In
each case, we study the rational verification problem for system
specifications $\varphi$ given as \GRone formulae and as \LTL
formulae, with respect to system models that are formally represented
as concurrent game structures~\cite{AlurHK02}.

Our main results, summarised in Table~\ref{tab:results}, show that in
the cases mentioned above, the 2\exptime result can be dramatically
improved, to settings where rational verification can be solved in
polynomial space, \np, or even in polynomial time, if the number of
players in the game is assumed to be fixed.

%

\begin{table*}[!ht]
	\begin{center}
		\setlength{\tabcolsep}{12pt}
		\begin{tabular}{c c c c}
			\toprule
			Players' goals & Specification & \enash & \\ 
			\midrule
			\LTL & \LTL & 2\exptimeC & \\[.25ex] 
			\GRone & \LTL & \pspaceC &(Corollary~\ref{cor:enashgroneltl}) \\[.25ex]
			\GRone & \GRone & \fpt &(Theorem~\ref{thm:enashgronegrone}) \\[.25ex]
			\MP & \LTL & \pspaceC &(Corollary~\ref{cor:enashmpltl}) \\[.25ex]
			\MP & \GRone & \npC &(Theorem~\ref{thm:enashmpgrone}) \\[.25ex]
			\bottomrule
		\end{tabular}
		\vspace{-5pt}
	\end{center}
	\caption{Summary of main complexity results.}
	\label{tab:results}
	\vspace{-10pt}
\end{table*}

In addition to characterising the complexity of the core rational verification problems for these settings, we also consider the problem of computing strategy profiles for players that \emph{maximise social welfare}.
Measures of social welfare are measures of how well society as a whole fares with some particular game outcome; thus social welfare measures are \emph{aggregate} measures of utility.
We look at two well-known measures of social welfare: \emph{utilitarian social welfare} (in which we aim to maximise the sum of individual agent utilities) and \emph{egalitarian social welfare} (in which we try to maximise the utility of the worst-off player).
We show that, for mean payoff games, computing outcomes for these measures with \LTL specifications is \pspace-complete.

\subsection*{\bf Related Work} The rational verification problem has been studied for a number of different settings, including iterated Boolean games, reactive modules games, and concurrent game structures~\cite{GutierrezHW15,GutierrezHW17-aij,GHW15,GutierrezHW17-apal}.
In each of these settings, the main rational verification problems are 2\exptimeC, and hence highly intractable. Rational verification is closely related to rational synthesis, which is also 2\exptimeC both in the Boolean case~\cite{FismanKL10} and with rational environments~\cite{KupfermanPV16}. One might mitigate the problem of intractability by considering low-level languages such as omega-regular specifications \cite{SGW21,ConduracheOT18} and turn-based setting \cite{ConduracheFGR16}.
All of the above cases only consider perfect information.
In settings with imperfect information, the problem has been shown to be undecidable both for games with succinct and explicit model representations~\cite{GutierrezPW18,FiliotGR18}.

Our work also relates to \LTL and mean-payoff (\MP) games in
general. While the former are already 2\exptimeC even for two-player
games (and in fact already 2\exptime-hard for many \LTL
fragments~\cite{AlurT04}), the latter are \npC for multi-player
games~\cite{UW11} and in $\np \cap \mathsf{co}\np$ for two-player
games~\cite{ZP96}, and in fact solvable in quasipolynomial time since
they can be reduced to two-player perfect-information parity
games~\cite{CaludeJKLS17}. Even though we provide several complexity results that improve on the complexity of the general case, our solutions are unlikely to run in polynomial time, for instance as \CTL\ model checking, since rational verification subsumes problems that are typically not known to be solvable in polynomial time, such as model checking or automated synthesis with temporal logic specifications.






\section{Preliminaries}
\label{sec:prelims}


\noindent \textbf{Linear Temporal Logic.}  \LTL extends propositional
logic with two operators, $\ltlnext$ (``next'') and $\until$
(``until''), for expressing properties of
paths~\cite{pnueli:77a,emerson:90a}.
The syntax of \LTL is defined with respect to a set $\AP$ of atomic propositions as follows:
$$ \phi ::= 
\mathop\top \mid
p \mid
\neg \phi \mid\phi \vee \phi \mid
\ltlnext \phi \mid
\phi \until \phi
$$
where $p \in \AP$.
As usual, we define $\phi_1 \wedge \phi_2 \equiv \neg (\neg \phi_1 \vee \neg \phi_2)$, $\phi_1 \to \phi_2 \equiv \neg \phi_1 \vee \phi_2$, $\sometime \phi \equiv \mathop\top \until \phi$, and $\always \phi \equiv \neg \sometime \neg \phi$.
We interpret \LTL formulae with respect to pairs $(\alpha,t)$, where $\alpha \in (2^{\AP})^{\omega}$ is an infinite sequence of sets of atomic proposition that indicates which propositional variables are true in every time point and $t \in \Nat$ is a temporal index into $\alpha$.
As usual, by $\alpha_t \in 2^{\AP}$ we denote the $t$-th element of the infinite sequence $\alpha$.
Formally, the semantics of \LTL is given by the following rules:
$$
\begin{array}{lcl}
(\alpha,t)\models\mathop\top	\\
(\alpha,t)\models p 				&\text{ iff }&	p\in \alpha_t\\
(\alpha,t)\models\neg \phi			&\text{ iff }&   \text{it is not the case that $(\alpha,t) \models \phi$}\\
(\alpha,t)\models\phi \vee \psi		&\text{ iff }&	\text{$(\alpha,t) \models \phi$  or $(\alpha,t) \models \psi$}\\
(\alpha,t)\models\ltlnext\phi			&\text{ iff }&	\text{$(\alpha,t+1) \models \phi$}\\
(\alpha,t)\models\phi\until\psi	&\text{ iff }&   \text{for some $t' \geq t: \ \big((\alpha,t') \models \psi$  and }\\
&&\quad\text{for all $t \leq t'' < t': \ (\alpha,t'') \models \phi \big)$.}\\
\end{array}
$$
If $(\alpha,0)\models\phi$, we write $\alpha\models\phi$ and say that
\emph{$\alpha$ satisfies~$\phi$}.

\vspace*{4pt} 
\noindent \textbf{General Reactivity of rank 1.}
The language of \emph{General Reactivity of rank 1}, ($\GRone$), is the fragment of \LTL containing formulae that are written in the following form~\cite{BJPPS12}:
$$
(\always \sometime \psi_1 \wedge \ldots \wedge \always \sometime \psi_m) \to (\always \sometime \phi_1 \wedge \ldots \wedge \always \sometime \phi_n)
\text{,}
$$
where subformulae $\psi_i$ and $\phi_i$ are Boolean combinations of atomic propositions.

\vspace*{4pt} 
\noindent \textbf{Mean-Payoff value.}
For an infinite sequence $\beta \in \mathbb{R}^\omega$ of real numbers, let $\MP(\beta)$ be
denote \emph{mean-payoff} value of $\beta$, that is, 
\[ \MP(\beta) = \lim \inf_{n \to \infty} \avg_n(\beta)
\]
where, for $n \in \mathbb{N}$, we define
\[ \avg_n(\beta) = \frac{1}{n}\sum_{j=0}^{n-1} \beta_j.\] 

\vspace*{4pt} \noindent \textbf{Arenas.}
An \emph{arena} is a tuple
\[A = \tuple{\Ag,  \Ac, \St, s_0, \trnFun, \labFun}\]
where $\Ag$, $\Ac$, and $\St$ are finite non-empty sets of \emph{players} (write $N = \card{\Ag}$), \emph{actions}, and \emph{states}, respectively;	$s_0 \in \St$ is the \emph{initial state}; $\trnFun : \St \times \AcProf \rightarrow \St$ is a \emph{transition function} mapping each pair consisting of a state $s \in \St$ and an \emph{action profile} $\jact \in \AcProf = \Ac^{\Ag}$, one for each player, to a successor state; and $\labFun: \St \to 2^{\AP}$ is a \emph{labelling function}, which maps every state to a subset of \emph{atomic propositions}---the atomic propositions that are true at that state. 

We sometimes refer to an action profile $\jact = (\act_{1}, \dots, \act_{n}) \in \AcProf$ as a \emph{decision}, and denote by $\act_i$ the action taken by player $i$.
We also consider \emph{partial} decisions.
For a set of players $C \subseteq \Ag$ and action profile $\jact$, we let $\jact_{C}$ and $\jact_{-C}$ be two tuples of actions, respectively, one for all players in $C$ and one for all players in $\Ag \setminus C$.
We also write $\jact_{i}$ for $\jact_{\{i\}}$ and $ \jact_{-i} $ for $ \jact_{\Ag \setminus \{i\}} $.
For two decisions $\jact$ and $\jact'$, we write $(\jact_{C}, \jact_{-C}')$ to denote the decision where the actions for players in $ C $ are taken from $\jact$ and the actions for players in $ \Ag \setminus C $ are taken from $\jact'$.

A \emph{path} $\pi = (s_0, \jact^0), (s_1, \jact^1), \ldots$ is an infinite sequence in $(\St \times \AcProf)^{\omega}$ such that $\trnFun(s_k, \jact^k) = s_{k + 1}$ for all $k$.
%
Paths are generated in the arena by each player~$i$ selecting a {\em
	strategy} $\strElm_i$ that will define how to make choices over
time.  We model strategies as finite state machines with output.
Formally, for arena $A$, a strategy
$\strElm_{i} = (Q_{i}, q_{i}^{0}, \delta_i, \tau_i) $ for player $i$
is a finite state machine with output (a transducer), where $Q_{i}$ is
a finite and non-empty set of \emph{internal states}, $ q_{i}^{0} $ is
the \emph{initial state},
$\delta_i: Q_{i} \times \AcProf \rightarrow Q_{i} $ is a deterministic
\emph{internal transition function}, and
$\tau_i: Q_{i} \rightarrow \Ac_i$ an \emph{action function}, $ \Ac_i \subseteq \Ac $ for all $ i \in \Ag $. Let
$\StrSet_i$ be the set of strategies for player $i$.
%
%
A \emph{strategy profile} $\strpElm = (\strElm_1, \dots, \strElm_n)$ is a vector of strategies, one for each player.
As with actions, $\strpElm_{i}$ denotes the strategy assigned to player $i$ in profile $\strpElm$.
Moreover, by $(\strpElm_{B}, \strpElm'_{C})$ we denote the combination
of profiles where players in disjoint $B$ and $C$ are assigned their corresponding strategies in $\strpElm$ and $\strpElm'$, respectively.

Once a state $s$ and a strategy profile $\strpElm$ are fixed, the game has an \emph{outcome}, a path in $A$, which we denote by $\pi(\strpElm, s)$. 
Because strategies are deterministic, $\pi(\strpElm, s)$ is the unique path induced by $\strpElm$, that is, the sequence $(s_0, \jact^0), (s_1, \jact^1), \ldots$ such that 
\begin{itemize}
	\item $s_{k + 1} = \trnFun (s_k, \jact_{k})$, and 
	\item	$\jact_{k + 1} = (\tau_1(q^k_1), \ldots, \tau_n(q^k_n))$, for all $k \geq 0$. 
\end{itemize}
	Where $q^{k + 1}_i = \delta_i(q^k_i, (\tau_1(q^k_1), \ldots, \tau_n(q^k_n)))$ is the unique sequence of internal states of strategy $\strElm_{i}$ in $\strpElm$ obtained by feeding the result of previous computation at each step.

Arenas define the dynamic structure of games (the actions that agents can perform and their consequences), but lack the feature of games that gives them their strategic nature: players' preferences.
A \emph{multi-player game} is obtained from an arena $A$ by
associating each player with a \emph{goal}. As indicated above, previous work has considered players with goals expressed as \LTL\ formulae, with the idea being that an agent will act as best they can to ensure their \LTL\ goal is satisfied (taking into account the fact that other players will act likewise). In the present article, we consider both goals that are expressed as $\GRone$ formulae, and mean payoff ($\MP$) goals: 
\begin{itemize}
    \item A multi-player \GRone game is a tuple $\Game_{\GRone} = \tuple{A, (\gamma_i)_{i \in \Ag}}$ where $A$ is an
arena and $ \gamma_i $ is the \GRone goal for player $i$.  
\item A multi-player \MP game is a tuple
$\Game_{\MP} = \tuple{A, (\wFun_{i})_{i \in \Ag}}$, where $A$ is an
arena and $\wFun_{i}: \St \to \SetZ$ is a function mapping every state
of the arena into an integer.
\end{itemize}
When it is clear from the context, we refer to a multi-player \GRone or \MP game as a {\em game} and denote it by $\Game$.
In any game with arena $A$, a path $\pi$ in $A$ induces a sequence $\lambda(\pi) = \lambda(s_0) \lambda(s_1) \cdots$ of sets of atomic propositions; if, in addition, $A$ is the arena of an \MP game, then, for each player~$i$, the sequence $\wFun_i(\pi) = \wFun_i(s_0) \wFun_i(s_1) \cdots$ of weights is also induced. 

For a \GRone game and a path $\pi$ in it, the payoff of a player~$i$ is $\pay_i(\pi) = 1$ if $\lambda(\pi) \models \gamma_i$ and $\pay_i(\pi) = 0$ otherwise.
Regarding an \MP game, the payoff of player~$i$ is $\pay_i(\pi) = \MP(\wFun_{i}(\pi))$.
Moreover, for a \GRone game and a path $\pi$, by $\WinSet(\pi) = \{i \in \Ag: \lambda(\pi) \models \gamma_i \}$ and $\LoseSet(\pi) = \{j \in \Ag: \lambda(\pi) \not\models \gamma_j \}$ we denote the set of \emph{winners} and \emph{losers}, respectively, over $\pi$, that is, the set of players that get their goal satisfied and not satisfied, respectively, over $\pi$.
With an abuse of notation, we sometime denote $\WinSet(\strpElm, s) = \WinSet(\pi(\strpElm, s))$ and $\LoseSet(\strpElm, s) = \LoseSet(\pi(\strpElm, s))$, respectively, the set of winners and losers over the path generated by strategy profile $\strpElm$ when starting the game from $s$. Furthermore, we simply write $ \pi(\strpElm) $ for $ \pi(\strpElm,s_0) $.

\vspace*{4pt} \noindent \textbf{Nash equilibrium.}
Using payoff functions, we can define the concept of Nash equilibrium~\cite{OR94}. 
For a game $\Game$, a strategy profile
$\strpElm$ is a \emph{Nash equilibrium} of~$\Game$ if, for every player~$i$ and strategy $\strElm'_i \in \StrSet_i$, we have
$$
\pay_i(\pi(\strpElm))	\geq	\pay_i(\pi((\strpElm_{-i},\strElm'_i))) \ . 
$$
Let $\NE(\Game)$ be the set of Nash equilibria of~$\Game$.

\vspace*{4pt} \noindent \textbf{\enash and rational verification.}
In rational verification, a key question/problem is \enash, which is concerned with the existence of a Nash equilibrium that fulfils a given temporal specification $\varphi$.
Formally, \enash is defined as follows:

\begin{definition}[\enash]
	Given a game $\Game$ and a formula $\varphi$:
	\begin{center}
		Does there exist $\strpElm \in \NE(\Game)$ such that $\pi(\strpElm)
		\models \varphi$?
	\end{center}
\end{definition}

Previous work~\cite{GutierrezHW15,GutierrezHW17-aij,GHW15,GutierrezHW17-apal} has demonstrated that, if we assume player goals are expressed as \LTL\ formulae, the \enash\ problem is 2\exptimeC, and hence highly intractable. 
Motivated by this, in this article, we study
\enash for a number of relevant instantiations of the problem, which
we show to have better (lower) computational complexity.  In particular, we
study cases where
\begin{itemize}
	\item 
	Specifications $\varphi$ are \LTL and players' goals are \GRone;
	\item 
	Specifications $\varphi$ are \LTL and players have \MP goals;
	\item 
	Both the specification $\varphi$ and the goals are \GRone;
	\item 
	Specifications $\varphi$ are \GRone and players have \MP goals. 
\end{itemize}

\noindent \textbf{Automata.}  Some of the algorithms we present for
the \enash problem use techniques from automata theory. Specifically,
we use deterministic automata on infinite words with \textit{Streett}
acceptance conditions. Formally, a \emph{deterministic Streett
	automaton on infinite words} (DSW) is a tuple
$\Automaton = (\Sigma, Q, q^0, \delta, \Omega)$ where $\Sigma$ is the
input alphabet, $Q$ is a finite set of states,
$ \delta: Q \times \Sigma \rightarrow Q $ is a transition function,
$ q^0 $ is an initial state, and $\Omega$ is a Streett acceptance
condition. A Streett condition $ \Omega $ is a set of pairs
$ \{(E_1,C_1),\dots,(E_n,C_n)\} $ where $ E_k \subseteq Q$ and
$ C_k \subseteq Q $ for all $ k \in [1,n] $. A run $ \rho $ is
accepting in a DSW~$\Automaton$ with condition $ \Omega $ if $\rho$
either visits $ E_k $ finitely many times or visits $ C_k $ infinitely
often, {\em i.e.}, if for every $ k $ either
$ \mathit{inf}(\rho) \cap E_k = \varnothing $ or
$ \mathit{inf}(\rho) \cap C_k \neq \varnothing $.


\section{Games of General Reactivity of Rank 1}
\label{sec:GRonegames}

We consider two variations of \GRone games: in the first, the specification formula is expressed in \LTL, while the goals are in \GRone; in the second, both the specification formula and the goals belong to \GRone.  We begin by providing a general result characterizing Nash Equilibrium for \GRone, which is given in terms of \emph{punishments}.  We first require some notation.

For a \GRone game $\Game$, player $j \in \Ag$, and state $s \in \St$, the strategy profile $\vec{\strElm}_{-j}$ is \emph{punishing} for player $j$ in $s$ if $\pi((\vec{\strElm}_{-j}, \strElm_{j}'), s) \not\models \gamma_j$, for every possible strategy $\strElm_{j}'$ of player $j$.
We say that a state $s$ is punishing for $j$ if there exists a punishing strategy profile for $j$ on $s$.
Moreover, we denote by $\Pun_{j}(\Game)$ the set of punishing states in $\Game$.
A pair $(s, \jact) \in \St \times \AcProf$ is \emph{punishing-secure} for player $j$, if $\trnFun(s, (\jact_{-j},\act'_j)) \in \Pun_{j}(\Game)$ for every action $\act_j'$.

\begin{theorem}
	In a given \GRone game $\Game$, there exists a Nash Equilibrium if and only if there exists an ultimately periodic path $\pi$ such that, for every $k \in \SetN$, the pair $(s_k, \jact^k)$ of the $k$-th iteration of $\pi$ is punishing-secure for every $j \in \LoseSet(\pi)$.
\end{theorem}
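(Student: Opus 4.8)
The plan is to exploit two structural features of \GRone\ goals. First, every \GRone\ goal $\gamma_j$ is a Boolean combination of properties of the form $\always\sometime\psi$ and is therefore \emph{prefix-independent}: whether a path satisfies $\gamma_j$ depends only on its infinite suffix, so any finite prefix may be discarded without changing the truth value. Second, for a fixed player $j$, the two-player zero-sum game played on the arena between $j$ (trying to enforce $\gamma_j$) and the coalition $\Ag\setminus\{j\}$ (trying to enforce $\neg\gamma_j$) has an $\omega$-regular objective, and is hence determined with finite-memory strategies on both sides. This yields a dichotomy: from any state $s$, either the coalition has a punishing strategy, i.e.\ $s\in\Pun_j(\Game)$, or $j$ has a finite-memory strategy enforcing $\gamma_j$ from $s$ against every coalition behaviour, i.e.\ $s\notin\Pun_j(\Game)$. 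This dichotomy, together with prefix-independence, drives both directions. I also use that, since strategies are finite-state transducers and the arena is finite, every outcome $\pi(\strpElm)$ is ultimately periodic.

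For the left-to-right direction I would take a Nash equilibrium $\strpElm$ and set $\pi=\pi(\strpElm)$, which is ultimately periodic. Suppose punishing-security fails: there are an index $k$, a loser $j\in\LoseSet(\pi)$, and an action $\act_j'$ with $s':=\trnFun(s_k,(\jact^k_{-j},\act_j'))\notin\Pun_j(\Game)$. By the dichotomy, $j$ has a finite-memory strategy enforcing $\gamma_j$ from $s'$. I would build a deviation $\strElm_j'$ that mimics $\strpElm_j$ for the first $k$ steps, plays $\act_j'$ at step $k$, and thereafter follows that enforcing strategy. Because the coalition's internal states after the common prefix coincide with those along $\pi$, the coalition plays $\jact^k_{-j}$ at step $k$, the play reaches $s'$, and from there $\gamma_j$ holds on the suffix; by prefix-independence the whole play satisfies $\gamma_j$, giving $j$ payoff $1>0=\pay_j(\pi)$ and contradicting that $\strpElm$ is a Nash equilibrium. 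Hence every $(s_k,\jact^k)$ is punishing-secure for every loser.

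For the right-to-left direction, given the ultimately periodic witness $\pi$, I would construct a Nash equilibrium $\strpElm$ in which all players follow $\pi$ in lockstep while monitoring the observed action profile. As long as the profile agrees with the prescription of $\pi$, everyone keeps following it, so the outcome is $\pi$. At the first step $k$ where the observed profile differs from $\jact^k$, it differs in exactly one coordinate $j$ (the only possibility under a unilateral deviation); if $j$ is a loser, every other player switches to a coordinated punishing strategy for $j$ from the resulting state, while if $j$ is a winner the response is immaterial. Punishing-security guarantees that for a loser the successor $s'=\trnFun(s_k,(\jact^k_{-j},\act_j'))$ lies in $\Pun_j(\Game)$ for every $\act_j'$, so a punishing strategy exists and forces the suffix—hence, by prefix-independence, the entire play—to violate $\gamma_j$. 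To check the equilibrium condition: a winner $j\in\WinSet(\pi)$ already has payoff $1$ and cannot improve regardless of the coalition's response; a loser $j\in\LoseSet(\pi)$ has payoff $0$ on $\pi$ and, after any unilateral deviation, is driven back to payoff $0$. Since $\pi$ is ultimately periodic and punishing strategies can be taken finite-memory, each player's strategy—follow $\pi$, detect the deviator, and select the appropriate finite-memory punishment for one of the finitely many possible successor states—is realisable as a finite-state transducer in $\StrSet_i$.

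The main obstacle is the backward construction in the \emph{concurrent} setting: one must ensure that a unilateral deviation is detectable and correctly attributed to its author. This is where I rely on strategies observing the full action profile (so the coalition reads off the unique deviating coordinate) and on the fact that for Nash equilibrium only single-player deviations matter, making the behaviour on profiles with several simultaneous deviations irrelevant. The other delicate point, common to both directions, is packaging ``follow a periodic path, then switch to an enforcing/punishing strategy upon deviation'' as a genuine finite-state machine; here finite-memory determinacy of the $\omega$-regular two-player game is essential both to justify the punish/win dichotomy and to keep the constructed deviations and equilibrium strategies within $\StrSet_i$.
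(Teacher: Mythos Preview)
Your proposal is correct and follows essentially the same approach as the paper: the forward direction argues by contradiction that a failure of punishing-security yields a beneficial deviation for a loser (using that a non-punishing state lets $j$ enforce $\gamma_j$, together with prefix-independence), and the backward direction builds the equilibrium as ``follow $\pi$, detect the unique deviator, switch to a coalition punishment,'' realised as a finite-state transducer. The paper's version is slightly more concrete in writing out the transducer $Q_i = T \times \St \times (\{\top\} \cup \LoseSet(\pi))$ and notes that the punishing strategies can be taken memoryless, while you appeal more explicitly to determinacy of the underlying two-player $\omega$-regular game; these are cosmetic differences rather than a different route.
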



\begin{proof}[Proof sketch]
	The proof proceeds by double implication.
	
	From left to right, let $\vec{\sigma} \in \NE(\Game)$ and $\pi$ be the ultimately periodic path generated by $\vec{\sigma}$.
	Assume by contradiction that $ \pi $ is not punishing-secure for some $j \in \LoseSet(\pi)$, that is, there is $ k \in \SetN $ and action $ \act'_j $ such that $ \trnFun(s_k,(\jact_{-j},\act'_j)^k) \notin \Pun_{j}(\Game) $.
	Thus, $ j $ can deviate at $ s_k $ and satisfy $ \gamma_j $, which is a contradiction to $ \vec{\sigma} $ being a Nash equilibrium.

	From right to left, recall that $ \pi $ can be generated by a finite transducer, say $\Trans^{\pi} = \tuple{T, t_0, \delta^{\pi}, \tau^{\pi}}$ with $\delta^{\pi}: T \times \AcProf \to T$ being the internal function and $\tau^{\pi}: T \to \AcProf$ being the action function that generates $\pi$.
	Moreover, observe that such transducer can be decomposed into strategies $\strElm_{i}^{\pi} = \tuple{T, t_0, \delta^{\pi}, \tau_{i}^{\pi}}$ where $\tau_{i}^{\pi}(t) = \tau^{\pi}(t)_{i}$.
	Moreover, for every losing player $j \in \LoseSet(\pi)$, there is a memoryless punishing strategy profile $\strElm_{-j}^{\pun}: \St \to \AcProf_{-j}$ for $j$ in every $ s \in \Pun_{j}(\Game)$.
	Such strategy can also be decomposed and \emph{distributed} to the agents different from $j$ as $\strElm_{-j}^{\pun, i}(s) = \strElm_{-j}^{\pun}(s)_{i}$ for every $i \in \Ag \setminus \{j\}$.
	
	Now, for every agent $i$, consider the strategy $\strElm_{i} = \tuple{Q_i, q^0_i, \delta_i, \tau_{i}}$ defined as follows:
	
	\begin{itemize}
		\item
		$Q_i = T \times S \times (\{\top\} \cup \LoseSet(\pi))$;
		
		\item
		$q_i^0 = (t_0, s_0, \top)$;
		
		\item
		$\delta_i$ is defined as follows~\footnote{Note that we should define the internal and action functions on their entire domains.
			However, their definition for the other cases is irrelevant in the proof.}:
		
		$
		\delta_i(t, s, \top, \jact) =
		\begin{cases}
		(\delta^{\pi}(t, \jact), \trnFun(s, \jact), \top), & \text{ if } \jact = \tau^{\pi}(t) \\
		(\delta^{\pi}(t, \jact), \trnFun(s, \jact), j), & \text{ if } \jact_{-j} = (\tau^{\pi}(t))_{-j} \text{ and } \jact_{j} \neq (\tau^{\pi}(t))_{j}
		\end{cases}
		$

		$
		\delta_i(t, s, j, \jact) = (\delta^{\pi}(t, \jact), \trnFun(s, \jact), j)
		$
		
		\item
		
		$\tau_{i}(t, s, \iota) = 
		\begin{cases}
		\tau^{\pi}_{i}(t) & \text{ if } \iota = \top \\
		\strElm_{ - \iota}^{\pun, i}(s) & \text{ otherwise }
		\end{cases}$
	\end{itemize}
	
	Intuitively, the strategy $\strElm_{i}$ mimics the transducer $\Trans^{\pi}$ to produce the play $\pi$.
	In addition to this, it keeps track of the actions taken by the losing agents, checking whether they adhere to the transducer or they deviate unilaterally from it.
	In case of a deviation of agent $j$, the strategy $\strElm_{i}$ flags the deviating agent and switches from mimicking $\Trans^{\pi}$ to adopting the punishment strategy $\strElm_{j}^{\pun}$.
	
	We need to show that the strategy profile $\strpElm$ is a Nash Equilibrium.
	
	Clearly, as $\pi(\strpElm) = \pi$, all the agents that are winning over $\pi$ do not have a beneficial deviation.
	For a losing agent $j$, observe that a unilateral deviation $\strElm_{j}$ triggers the strategy profile $\strpElm_{-j}$ to implement a punishment over $j$.
	Moreover, observe that \GRone objectives are prefix-independent, which implies that the punishment takes effect no matter at which instant of the computation is started being adopted.
	Therefore, every deviation $\strpElm_{j}'$ cannot be beneficial for agent $j$, and hence $\strpElm$ is a Nash Equilibrium.
	\qed
\end{proof}

With this result in place, the following procedure can be seen to solve \enash:

\begin{enumerate}
	\item
	Guess a set $W \subseteq \Ag$ of winners;
	
	\item 
	For each player~$j \in L = \Ag \setminus W$, a loser in the game, compute its punishment region $\Pun_{j}(\Game)$;
	
	
	\item 
	Remove from $\Game$ the states that are not punishing for players $j \in L$ and the edges $(s,s')$ that are labelled with an action profile $\jact$ such that $(s,\jact)$ is not punishing-secure for some $j \in L$, thus obtaining a game $\Game^{-L}$;
	
	\item 
	Check whether there exists an ultimately periodic path $\pi$ in $\Game^{-L}$ such that $\pi \models \varphi \wedge \bigwedge_{i \in W} \gamma_i$ holds. 
\end{enumerate}
Expressed more formally, the above procedure yields Algorithm~1.
\begin{algorithm}[h]
	\caption{\label{alg:enashgrone} \enash of \GRone games.}
	\textbf{Input}: A game $\Game_{\GRone}$ and a specification formula $\varphi$.
	
	
	\For{$i \in \Ag$}{
		Compute $\Pun_i(\Game)$	
	}
	\For{$W \subseteq \Ag$}{
		Compute $L = \Ag \setminus W$
		
		Compute $\Game^{-L}$
		
		\If{$\pi \models (\varphi \wedge \bigwedge_{i \in W} \gamma_i)$ for some $\pi \in \Game^{-L}$}{
			\Return $\mthfun{Accept}$
		}
	}
	\Return $\mthfun{Reject}$
\end{algorithm}

While line~6 requires solving the model checking problem for an \LTL
formula, which can be done in polynomial space, line~5 can be done in
polynomial time. Line~4, on the other hand, makes the procedure run in
exponential time in the number of players, but still in polynomial
space. We then only need to consider line~3: this step can be done in
polynomial time, as we now show.
\begin{theorem}\label{thm:ptimewregion}
	For a given \GRone game $\Game$ over the arena $A = \tuple{\Ag, \Ac, \St, s_0, \trnFun, \labFun}$ and a player $j \in \Ag$, computing the punishing region $\Pun_{j}(\Game)$ of player~$ j $ can be done in polynomial time with respect to the size of both $\Game$ and $\gamma_j$.
\end{theorem}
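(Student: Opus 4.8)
The plan is to identify $\Pun_j(\Game)$ with the winning region of the coalition $\Ag \setminus \{j\}$ in a two-player zero-sum game against player~$j$ — player~$j$ trying to satisfy $\gamma_j$ and the coalition trying to falsify it — and then to solve that game in polynomial time using the fact that $\gamma_j$ is a \GRone objective.

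First I would sequentialise the concurrent arena into a turn-based one $A_j$, exactly as in Figure~\ref{fig:seqgame}: a transition out of a state $s$ is broken into a coalition node $s$, where the coalition commits to a partial decision $\jact_{-j}$ and moves to an auxiliary node $(s, \jact_{-j})$, followed by a player-$j$ node $(s, \jact_{-j})$, where $j$ picks $\act_j$ and the play moves to $\trnFun(s, (\jact_{-j}, \act_j))$. The auxiliary nodes carry no atomic propositions, so the $\psi_i$ and $\phi_k$ are evaluated only on the $\St$-nodes; since these recur every other step, no $\always\sometime$ subformula is affected. The number of nodes of $A_j$ is $\card{\St} + \card{\St}\cdot\card{\Ac}^{N-1}$, which is polynomial in the size of $\Game$ — indeed smaller than the explicit transition table $\trnFun$, which already has $\card{\St}\cdot\card{\Ac}^{N}$ entries.

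The crux is that this sequentialisation preserves the punishment semantics: the coalition forces $\neg\gamma_j$ from $s$ in $A_j$ if and only if $s$ is punishing for $j$ in $\Game$. The only delicate direction is that a concurrent punishing profile remains winning after sequentialisation, and it hinges on strategies being deterministic. A concurrent punishing profile $\vec{\strElm}_{-j}$ is by definition a pure strategy, so the partial decision $\jact_{-j}$ it plays is a deterministic function of the history; the $A_j$-history thus carries no more information than the $\Game$-history, and letting player~$j$ observe $\jact_{-j}$ before responding (as $A_j$ does) leaks nothing. Hence every $A_j$-strategy of $j$, played against $\vec{\strElm}_{-j}$, matches a $\Game$-strategy with the same outcome, so $\vec{\strElm}_{-j}$ keeps winning in $A_j$. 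Conversely, observing $\jact_{-j}$ can only help $j$, so any coalition strategy winning in $A_j$ is a fortiori punishing in $\Game$. I expect formalising this information-leak argument to be the main obstacle.

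Finally, $\gamma_j = (\bigwedge_{i=1}^{m} \always\sometime \psi_i) \to (\bigwedge_{k=1}^{n} \always\sometime \phi_k)$ is precisely a \GRone winning condition for player~$j$ (the $\psi_i$ being the assumptions of the coalition/environment and the $\phi_k$ the guarantees of the responding player~$j$). Turn-based $\omega$-regular games are determined, so the nodes of $A_j$ split into player~$j$'s winning region $W_j$ and its complement; combining determinacy with the equivalence above gives $\Pun_j(\Game) = \St \setminus W_j$, identifying each state with its coalition node. We deliberately compute from player~$j$'s side: $\neg\gamma_j$ is not itself a \GRone formula, whereas $\gamma_j$ is, and $W_j$ can be computed by the nested-fixpoint \GRone algorithm of~\cite{BJPPS12}, whose running time is polynomial in the number of nodes of $A_j$ and in the numbers $m$ and $n$ of conjuncts. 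Complementing $W_j$ then yields $\Pun_j(\Game)$ within time polynomial in the sizes of $\Game$ and of $\gamma_j$, as required.
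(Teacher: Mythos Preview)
Your argument is correct and follows a genuinely different route from the paper. Rather than invoking the \GRone algorithm of~\cite{BJPPS12} on the sequentialised arena, the paper compiles $\gamma_j$ into a single Streett pair: it augments the state space with two cyclic counters $\iota_1 \in \{0,\ldots,m_j\}$ and $\iota_2 \in \{0,\ldots,n_j\}$ that advance whenever the current assumption $\psi_{\iota_1}^j$ (respectively guarantee $\theta_{\iota_2}^j$) is met, so that hitting $\iota_1 = 0$ infinitely often is equivalent to $\bigwedge_l \always\sometime\psi_l^j$ and likewise for $\iota_2$; the pair $(C_j,E_j)$ then encodes $\gamma_j$, and the bound $O(mn^{k+1}kk!)$ for $k$-pair Streett games~\cite{PP06} gives polynomial time at $k=1$. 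The paper leaves the passage from the concurrent arena to a two-player game entirely implicit, whereas you spell out the sequentialisation and the determinism argument that justifies it --- so on that point you are more careful than the paper. Conversely, the paper's reduction exhibits the one-pair Streett structure explicitly instead of deferring to the internals of the \GRone solver. One small fix to your construction: labelling the auxiliary nodes $(s,\jact_{-j})$ with the empty set does not preserve $\always\sometime\chi$ in both directions when $\chi$ happens to hold at the empty valuation (e.g.\ $\chi = \neg p$); copy $\labFun(s)$ onto $(s,\jact_{-j})$ instead, so that the label sequence in $A_j$ is a stuttering of the original and every $\always\sometime$ subformula is preserved verbatim.
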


\begin{proof}
	We reduce the problem to computing the winning region of a suitably defined Streett game with a single pair as the winning condition, whose complexity is known to be $O(mn^{k + 1}k k!)$~\cite{PP06}.
	Given that in our case we have $k = 1$, we obtain a polynomial time algorithm.
	
	Recall that the goal of player $j$ is of the form:
	
	$$
	\gamma_j = \bigwedge_{l = 1}^{m_j}\always \sometime \psi_{l}^{j} \to \bigwedge_{r = 1}^{n_j} \always \sometime \theta_{r}^{j}\text{,}
	$$
	where $\psi_{l}^{j}$'s and $\theta_{r}^{j}$'s are boolean combinations of atomic propositions.
	Then, consider the arena $A' = \tuple{\Ag, \Ac, \St', s_0', \trnFun'}$~\footnote{We omit the definition of labelling function, as not needed here.} where
	
	\begin{itemize}
		\item
		$\St' = \St \times \{0, \ldots, m_j\} \times \{0, \ldots, n_j \}$;
		
		\item 
		$s_0' = (s_0, 0, 0)$;
		
		\item 
		$\trnFun'((s, \iota_{1}, \iota_{2}), \jact) = (\trnFun(s,\jact), \iota_{1}', \iota_{2}')$ where
		
		$\iota_{1}' =
		\begin{cases}
		(\iota_{1} \oplus_{(m_{j} + 1)} 1), & \text{if } \iota_{1} = 0 \text{ or } s \models \psi_{\iota_{1}}^{j}. \\
		\iota_{1}, & \text{otherwise}.
		\end{cases}$\\
		$\iota_{2}' =
		\begin{cases}
		(\iota_{2} \oplus_{(n_{j} + 1)} 1), & \text{if } \iota_{2} = 0 \text{ or } s \models \theta_{\iota_{2}}^{j}.\\
		\iota_{2}, & \text{otherwise. }
		\end{cases}$\\
		And by $\oplus_{k}$ we denote the addition modulo $k$.
	\end{itemize}

	Intuitively, arena $A'$ mimics the behaviour of $A$ and carries two indexes, $\iota_{1}$ and $\iota_{2}$.
	Index $\iota_{1}$ is increased by one every time the path visits a state that satisfies $\psi_{\iota_{1}}^{j}$ and resets to $0$ every time the path visits a state that satisfies $\psi_{m_j}^j$.
	Clearly, $\iota_{1}$ is reset infinitely many times if and only if the path satisfies every $\psi_{l}^{j}$ infinitely many times, and so if and only if it satisfies the temporal specification $\bigwedge_{l = 1}^{m_j}\always \sometime \psi_{l}^{j}$.
	The same argument applies to index $\iota_{2}$, but with respect to the boolean combinations $\theta_{r}^j$'s.
	
	Now, consider the sets $C_j = \St \times \{0\} \times \{0, \ldots, n_j \}$ and $E_j = \St \times \{0, \ldots, m_j \} \times \{0\}$.
	Clearly, the Streett pair $(C_j, E_j)$ is satisfied by all and only the paths in $A'$ that satisfy $\gamma_j$.
	Therefore, the winning region of $\gamma_j$ can be computed as the winning set of the Streett game with $(C_j, E_j)$ being the only Streett pair.
	Observe that the winning region is computable as Street games are \emph{determined}.
	Moreover, having a number of pairs fixed, the computation can be done in polynomial time, which proves our statement. \qed
\end{proof}

Based on Theorem~\ref{thm:ptimewregion}, we have the following result. 

\begin{corollary}
	\label{cor:enashgroneltl}
	The \enash problem for \GRone games with an \LTL specification is \pspaceC.
\end{corollary}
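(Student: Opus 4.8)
The plan is to establish membership in \pspace\ by analysing Algorithm~\ref{alg:enashgrone}, and \pspace-hardness by a reduction from \LTL\ satisfiability. The corollary is essentially an assembly of the pieces already prepared: the correctness of the guess-and-check procedure (justified by the preceding characterisation theorem), the polynomial-time computation of punishing regions (Theorem~\ref{thm:ptimewregion}), and the classical \pspace-completeness of \LTL\ reasoning.

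For membership, I would argue that the whole procedure fits in polynomial working space. The preprocessing loop computes, for each player $i \in \Ag$, the region $\Pun_i(\Game)$; by Theorem~\ref{thm:ptimewregion} each is computable in polynomial time, hence in polynomial space, and each is a subset of $\St$ of polynomial size, so all of them can be stored simultaneously in polynomial space. The outer loop ranges over the $2^N$ subsets $W \subseteq \Ag$, but this only costs an $N$-bit counter, and the polynomial workspace used inside one iteration can be reused across iterations, so the exponential number of iterations does not raise the space bound. Inside an iteration, computing $L = \Ag \setminus W$ and the pruned game $\Game^{-L}$ is polynomial time: discarding non-punishing states uses the precomputed regions, and deciding, for an edge labelled $\jact$, whether $(s,\jact)$ is punishing-secure for every $j \in L$ reduces to checking, over all actions $\act_j'$, that $\trnFun(s,(\jact_{-j},\act_j')) \in \Pun_j(\Game)$, a polynomial membership test. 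Finally, line~6 tests whether some ultimately periodic path of the finite arena $\Game^{-L}$ satisfies $\varphi \wedge \bigwedge_{i \in W} \gamma_i$; since a conjunction of \GRone\ goals is an \LTL\ formula of size polynomial in the input, this is an instance of existential \LTL\ model checking over a finite transition system, solvable in \pspace. Composing these bounds yields \enash\ $\in$ \pspace.

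For hardness, I would reduce from \LTL\ satisfiability, which is \pspaceC. Given an \LTL\ formula $\varphi$ over $\AP$, I construct a one-player \GRone\ game whose arena has one state $s_X$ per set $X \in 2^{\AP}$ with $\labFun(s_X) = X$, together with a fresh initial state $s_0$, and in which the single player has one action moving to each target state; the set of label sequences induced by paths is then all of $\labFun(s_0)\cdot(2^{\AP})^\omega$. I assign the player the trivial \GRone\ goal obtained by taking $m = n = 0$ (the empty instance $\bigwedge\emptyset \to \bigwedge\emptyset \equiv \top$), so the player is always a winner and every strategy profile is a Nash equilibrium; hence $\NE(\Game)$ induces exactly all such paths. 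Taking the specification to be $\ltlnext\varphi$ (the initial $\ltlnext$ discarding the fixed first letter), the instance $(\Game, \ltlnext\varphi)$ is a yes-instance iff some infinite word over $2^{\AP}$ satisfies $\varphi$, i.e.\ iff $\varphi$ is satisfiable. The reduction is clearly polynomial, giving \pspace-hardness; equivalently one may reduce from existential \LTL\ model checking, which is likewise \pspaceC.

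The step needing the most care is the membership direction, specifically justifying that the explicit exponential loop over winner sets does not push the complexity beyond \pspace. The crux is the reuse of polynomial workspace across the $2^N$ iterations, combined with the facts already available from Theorem~\ref{thm:ptimewregion} and the preceding discussion: that each $\Pun_j(\Game)$ is polynomial-time computable, that punishing-security of an edge is a polynomial-time check, and that the final test is \LTL\ model checking on a formula that stays polynomial in the input—which holds because $\varphi$ and the goals $\gamma_i$ are given explicitly, so $\varphi \wedge \bigwedge_{i \in W}\gamma_i$ has size bounded by their total length.
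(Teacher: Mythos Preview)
Your membership argument is correct and matches the paper's reasoning (with considerably more detail than the paper itself supplies): Theorem~\ref{thm:ptimewregion} handles the punishment regions, the loop over winner sets is harmless for space, and line~6 is existential \LTL\ model checking on a polynomial-size formula.

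For hardness, however, your primary reduction from \LTL\ satisfiability is \emph{not} polynomial as written. You build an arena with one state $s_X$ for each $X \in 2^{\AP}$; if $\varphi$ mentions $k$ propositions, that is $2^{k}$ states, and $k$ can be linear in $|\varphi|$. So the sentence ``the reduction is clearly polynomial'' is false. This can be repaired---for instance by invoking the (non-obvious) fact that \LTL\ satisfiability is already \pspace-hard over a single atomic proposition, which bounds the arena to constant size---but you would have to say so and justify it.

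Your throwaway alternative at the end, reducing from existential \LTL\ model checking, is in fact the clean route and is exactly what the paper does: take the given Kripke structure as the arena of a game with tautological \GRone\ goals, so every profile is an equilibrium, and use (the negation of) the given \LTL\ formula as the specification. That reduction is trivially polynomial since the arena is the input structure itself. You should promote this to your main hardness argument rather than presenting it as an equivalent afterthought.
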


\begin{proof}
	The upper-bound follows from the procedure described above.
	Regarding the lower-bound, note that model-checking an \LTL formula $\varphi$ against a Kripke structure $\Kripke$ can be easily encoded as an instance of \enash where $\Game$ is played over a Kripke structure $\Kripke$, taken to be its arena, players' goals being tautologies, and the specification being $\neg \varphi$. In such a case, we have that $\Kripke \models \varphi$ if and only if \enash for the pair $(\Game,\varphi)$ has a negative answer. \qed 
\end{proof}

Corollary~\ref{cor:enashgroneltl} sharply contrasts with the complexity of \enash\ when goals expressed as \LTL formulae: in this more general case, \enash is 2\exptimeC.

\vspace*{1ex}\noindent \textbf{The special case of \GRone specifications.}
One of hardest parts of Algorithm~\ref{alg:enashgrone} is line~6, where an \LTL model checking problem must be solved, thereby making the running time of the overall procedure exponential in the size of the specification and goals of the players. As we show in the reminder of this section, one way to drastically reduce the complexity of our decision procedure is to require that the specification is also expressed in \GRone.
In such a case, the \LTL model checking procedure in line~6 of Algorithm~\ref{alg:enashgrone} can be avoided, leading to a much simpler construction, which runs in polynomial time for every fixed number of players. In this section, we provide precisely such a simpler construction. 

Recall that every \GRone specification $\varphi$ can be regarded as a Streett condition with a single pair over an arena $A'$ suitably constructed from the original arena~$A$~\cite{BCGHJ10}.
Thus, by denoting $(C_\varphi, E_\varphi)$ and $(C_{i}, E_{i})$ the Streett pairs corresponding to the \GRone conditions $\varphi$ and $\gamma_i$, respectively, the problem of finding a path in $A'$ satisfying the formula $\varphi \wedge \bigwedge_{i \in W} \gamma_i$ amounts to deciding the emptiness of the Streett automaton $\Automaton = \tuple{\AcProf, \St', s_{0}', \trnFun, \Omega}$ where $\Omega = \{(C_\varphi, E_\varphi), (C_{\gamma_i}, E_{\gamma_i})_{i \in W} \}$.

Note that the size of $A'$ is polynomial in the size of the \GRone formulae involved, polynomial in the number of states and actions in the original arena $A$, and exponential in the number of players.
More specifically, we have that $\card{\St'} = \card{\St} \cdot \card{\gamma}^{\card{\Ag}}$ and so the number of edges is at most $\card{\St'}^{2}$.
Moreover, the emptiness problem of a deterministic Streett word automaton can be solved in time that is polynomial in the automaton's index and its number of states and transitions~\cite{HT96,Kupferman15}.
The complexity of the \enash problem takes $2^{\card{\Ag}}$ times a procedure for computing at most $\card{N}$ punishing regions (that is polynomial in the size of both $\Game$ and $\varphi, \gamma_{1}, \ldots, \gamma_{N}$) plus the complexity of the emptiness problem for a Streett automaton whose size is polynomial in $\Game$ $\varphi, \gamma_{1}, \ldots, \gamma_{N}$, and exponential in the number of players.

Based on the constructions described above, we have the following (fixed-parameter tractable) complexity result. 

\begin{theorem}
	\label{thm:enashgronegrone}
	For a given \GRone game $\Game$ and a \GRone formula $\varphi$, the \enash problem can be solved in time that is polynomial in $\card{\St}$, $\card{\Ac}$, and $\card{\varphi}$, $\card{\gamma_{1}}, \ldots, \card{\gamma_{N}}$ and exponential in the number of players $\card{\Ag}$.
	Therefore, the problem is fixed-parameter tractable, parametrized in the number of players.
\end{theorem}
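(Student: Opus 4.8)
The plan is to show that Algorithm~\ref{alg:enashgrone} solves \enash within the claimed bounds when the specification $\varphi$ is itself a \GRone formula, the only departure from the \LTL case being the test in line~6. Correctness is inherited from the characterisation theorem established above: a Nash equilibrium realising a chosen set of winners exists exactly when the restricted game $\Game^{-L}$, obtained by deleting non-punishing states and non-punishing-secure edges for the losers $L = \Ag \setminus W$, contains an ultimately periodic path $\pi$ with $\pi \models \varphi \wedge \bigwedge_{i \in W}\gamma_i$. Since this characterisation and the polynomial-time computability of each punishing region $\Pun_j(\Game)$ (Theorem~\ref{thm:ptimewregion}) are already in hand, the work that remains is purely to bound the cost of the final path-search when $\varphi$ is \GRone.

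First I would turn line~6 into a Streett non-emptiness test. Following the degeneralisation used in the proof of Theorem~\ref{thm:ptimewregion} (see also~\cite{BCGHJ10}), I build the extended arena $A'$ whose states augment those of $A$ with the pair of modular counters witnessing the assumption and guarantee conjuncts of a \GRone formula; doing this simultaneously for $\varphi$ and for each $\gamma_i$ with $i \in W$ yields a product arena on which every one of these \GRone conditions becomes a single Streett pair $(C, E)$. Collecting them gives the acceptance condition $\Omega = \{(C_\varphi, E_\varphi)\} \cup \{(C_{\gamma_i}, E_{\gamma_i}) : i \in W\}$, so that the deterministic Streett automaton $\Automaton = \tuple{\AcProf, \St', s_0', \trnFun, \Omega}$, taken over the edge set of $\Game^{-L}$, is non-empty precisely when an ultimately periodic path satisfying the conjunction exists. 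This removes the \LTL model-checking step entirely and replaces the exponential dependence on formula size by a polynomial one.

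The remaining step is the complexity bookkeeping. Each player's goal contributes one counter to the product, so $\card{\St'} = \card{\St}\cdot\card{\gamma}^{\card{\Ag}}$, which is polynomial in $\card{\St}$ and in the formula sizes but exponential only in $\card{\Ag}$; the number of transitions is at most $\card{\St'}^{2}$, and the index of $\Omega$ is at most $N+1$. Streett non-emptiness is solvable in time polynomial in the number of states, transitions, and the index~\cite{HT96,Kupferman15}, while computing $\Game^{-L}$ from the precomputed punishing regions is polynomial. Multiplying the single-iteration cost by the $2^{\card{\Ag}}$ iterations of the loop over winner sets $W$ gives a total running time that is polynomial in $\card{\St}$, $\card{\Ac}$, $\card{\varphi}$ and $\card{\gamma_1}, \ldots, \card{\gamma_N}$, and exponential in $\card{\Ag}$, which is exactly the fixed-parameter tractability claim with the number of players as parameter.

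The main obstacle I anticipate is not the complexity accounting but justifying the reduction of the conjunctive path-search to a single Streett non-emptiness query: I must check that the product of the per-formula counter gadgets is sound and complete (a path of $\Game^{-L}$ satisfies $\varphi \wedge \bigwedge_{i \in W}\gamma_i$ iff its lift to $A'$ satisfies every Streett pair of $\Omega$), and that non-emptiness can always be witnessed by an ultimately periodic run, as the characterisation demands. Both follow from the standard properties of deterministic Streett automata---in particular that a non-empty $\omega$-regular language recognised by a finite automaton contains an ultimately periodic word---but this is the point where the argument must be made precise rather than merely asserted.
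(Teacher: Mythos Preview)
Your proposal is correct and follows essentially the same approach as the paper: replace line~6 of Algorithm~\ref{alg:enashgrone} by a Streett non-emptiness test on the product arena $A'$ obtained by attaching the modular counters for $\varphi$ and each $\gamma_i$, collect the resulting pairs into $\Omega$, and invoke the polynomial-time emptiness algorithm of~\cite{HT96,Kupferman15}; the size bound $\card{\St'} = \card{\St}\cdot\card{\gamma}^{\card{\Ag}}$ and the $2^{\card{\Ag}}$ outer loop give the claimed FPT bound. Your final paragraph on soundness/completeness and ultimately periodic witnesses is a welcome addition, since the paper itself only states the reduction without spelling out that justification.
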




\section{Mean-Payoff Games}
\label{sec:mpgames}
We now focus on multi-player mean-payoff (\MP) games.
As in the previous case, we first characterise the Nash Equilibria of a game in terms of punishments and then reduce \enash to a suitable path-finding problem in the underlying arena.
To do this, we first need to recall the notion of secure values for mean-payoff games~\cite{UW11}.

For a player $i$ and a state $s \in \St$, by $\pun_i(s)$ we denote the
punishment value of $i$ over $s$, that is, the maximum payoff that $i$
can achieve from $s$, when all other players behave adversarially.
Such a value can be computed by considering the corresponding
two-player zero-sum mean-payoff game~\cite{ZP96}.  Thus, it is in
${\np} \cap {\mathsf{co}\np}$, and note that both player $i$ and coalition
$\Ag \setminus \{i\}$ can achieve the optimal value of the game using
memoryless strategies.

For a player $i$ and a value $z \in \SetR$, a pair $(s, \jact)$ is $z$-secure for $i$ if $\pun_i(\trnFun(s, (\jact_{-i}, \act'_i))) \leq z$ for every $\act'_i \in \Ac$.

\begin{theorem}
	\label{thm:pthfinding}
	For every \MP game $\Game$ and ultimately periodic path $\pi = (s_0, \jact_{0}), (s_1, \jact_{1}), \ldots $, the following are equivalent
	
	\begin{enumerate}
		\item 
		There is $\strpElm \in \NE(\Game)$ such that $\pi = \pi(\strpElm, s_0)$;
		
		\item 
		There exists $\vec{z} \in \SetR^{\Ag}$, where $z_{i} \in \{\pun_i(s): s \in \St \}$ such that, for every $i \in \Ag$
		
		\begin{enumerate}
			\item 
			for all $k \in \SetN$, the pair $(s_k, \jact^k)$ is $z_i$-secure for $i$, and 
			
			\item 
			$z_i \leq \pay_i(\pi)$.
		\end{enumerate}
	\end{enumerate}
	
\end{theorem}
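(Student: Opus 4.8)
The plan is to prove the two implications separately, reusing the punishment-based reasoning already developed for \GRone games but now replacing ``punishing state'' with the value $\pun_i(s)$ of the associated two-player zero-sum mean-payoff game, and leaning heavily on the fact that the mean-payoff objective is prefix-independent. Throughout, the key lever is that $\pun_i(s)$ is a value that player~$i$ can \emph{secure} from $s$ against \emph{any} behaviour of the coalition $\Ag\setminus\{i\}$, while the coalition can, using a memoryless strategy, hold $i$ to \emph{at most} $\pun_i(s)$ from $s$.

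For the direction $(1)\Rightarrow(2)$, I would start from a Nash equilibrium $\strpElm$ with $\pi=\pi(\strpElm,s_0)$ and set, for each player~$i$,
\[
z_i \;=\; \max\bigl\{\, \pun_i(\trnFun(s_k,(\jact^k_{-i},\act'_i))) \;:\; k\in\SetN,\ \act'_i\in\Ac \,\bigr\}.
\]
Since $\pi$ is ultimately periodic there are only finitely many distinct pairs $(s_k,\jact^k)$, so this maximum is attained at some state $s^\ast$, giving $z_i=\pun_i(s^\ast)\in\{\pun_i(s):s\in\St\}$, as required. Condition~(2a) then holds by construction, as $z_i$ dominates the punishment value of every one-step deviation from $\pi$. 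Condition~(2b), that $z_i\le\pay_i(\pi)$, is where the equilibrium hypothesis enters: if instead $z_i>\pay_i(\pi)$, then at a step $k^\ast$ and deviating action $\act'_i$ realizing the maximum, player~$i$ could play $\act'_i$, reach $s^\ast=\trnFun(s_{k^\ast},(\jact^{k^\ast}_{-i},\act'_i))$, and thereafter follow the strategy witnessing $\pun_i(s^\ast)$. Because that value is guaranteed against arbitrary coalition behaviour and mean-payoff ignores the finite prefix leading to $s^\ast$, this deviation yields payoff $\ge\pun_i(s^\ast)=z_i>\pay_i(\pi)$, contradicting that $\strpElm$ is a Nash equilibrium.

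For the direction $(2)\Rightarrow(1)$, I would build an equilibrium producing $\pi$ by the same ``grim-trigger'' construction used in the \GRone case. As $\pi$ is ultimately periodic it is generated by a finite transducer $\Trans^{\pi}$; each player's strategy mimics $\Trans^{\pi}$ while monitoring the joint action, and upon detecting the first unilateral deviation by some player~$j$ it switches, for every player other than $j$, to the memoryless punishing strategy $\strpElm_{-j}^{\pun}$ that holds $j$ to at most $\pun_j(s)$ from the current state $s$ (such memoryless strategies exist by optimality in the zero-sum game). On the equilibrium path the outcome is exactly $\pi$, so each $i$ receives $\pay_i(\pi)$. If player~$i$ deviates, the first deviation occurs at some step $k$ where the coalition still plays $\jact^k_{-i}$, so the successor $s'=\trnFun(s_k,(\jact^k_{-i},\act'_i))$ satisfies $\pun_i(s')\le z_i$ by the $z_i$-security of $(s_k,\jact^k)$; thereafter the coalition punishes, and prefix-independence gives $i$ a payoff of at most $\pun_i(s')\le z_i\le\pay_i(\pi)$. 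Hence no deviation is profitable and $\strpElm\in\NE(\Game)$.

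The main obstacle, and the genuine point of departure from the \GRone setting, is the forward direction: one must pick $z_i$ that is simultaneously an upper bound on the punishment values of all one-step deviations (needed for $z_i$-security) and a member of the finite set $\{\pun_i(s):s\in\St\}$, and then extract $z_i\le\pay_i(\pi)$ purely from the equilibrium property. Taking $z_i$ to be precisely the maximum over deviation values resolves both constraints at once, and the equilibrium property then forces that maximum below $\pay_i(\pi)$; verifying this cleanly is where the combination of ``guaranteed value against any opponent'' and prefix-independence does all the work, since it lets every deviation be judged by the single state reached immediately after it.
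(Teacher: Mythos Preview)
Your proposal is correct and follows essentially the same approach as the paper: in both directions you use exactly the paper's choices---defining $z_i$ as the maximum of $\pun_i$ over all one-step deviation successors along $\pi$, deriving $z_i$-security by construction and $z_i\le\pay_i(\pi)$ by contradiction with the equilibrium property, and for the converse building the grim-trigger profile from a transducer for $\pi$ together with the memoryless coalition punishment strategies. Your write-up is in fact slightly more explicit than the paper's on two points (why the maximum is attained and lies in $\{\pun_i(s):s\in\St\}$, and the role of prefix-independence in bounding the deviator's payoff by $\pun_i(s')$), but the argument is the same.
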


%

\begin{proof}
	The proof proceeds by double implication.
	
	For the case $(1) \Rightarrow (2)$, assume that $\strpElm \in \NE(\Game)$ is such that $\pi(\strpElm) = \pi$.
	Thus, define $z_i = \max \set{\pun_i(\trnFun(s_k,(\jact^{k}_{-i}, \act')))}{k \in \SetN, \act' \in \Ac_{i}}$, that is, the max value agent $i$ can achieve by unilaterally deviating from any point in $\pi$ and getting immediately punished.
	By definition, we obtain that $(s_k, \jact^{k})$ is $z_i$-secure for $i$, at every $k \in \SetN$.
	Moreover, assume by contradiction that $\pay_{i}(\pi) < z_i$ for some agent $i$.
	Then, let $k \in \SetN$ and $\act'_i \in \Ac_{i}$ be such that $z_i = \pun_i(s_k, (\jact_{-i}, \act'_{i}))$.
	Thus, there exists a strategy $\strElm_{i}'$ that follows $\strpElm_{i}$ for $k$ steps and then deviates using $\act_{i}'$ that ensures a payoff of $z_i$ for agent $i$.
	Such strategy is a beneficial deviation of agent $i$ from $\strpElm$, in contradiction with the fact that $\strpElm$ is a Nash Equilibrium.
	
	For the case $(2) \Rightarrow (1)$, we define a strategy profile $\strpElm$ and then prove it is a Nash Equilibrium.
	First observe that, being $\pi$ ultimately periodic, there exists a finite transducer $\Trans^{\pi} = \tuple{T, t_0, \delta^{\pi}, \tau^{\pi}}$ with $\delta^{\pi}: T \times \AcProf \to T$ being the internal function and $\tau^{\pi}: T \to \AcProf$ being the action function that generates $\pi$.
	Moreover, observe that such transducer can be decomposed into strategies $\strElm_{i}^{\pi} = \tuple{T, t_0, \delta^{\pi}, \tau_{i}^{\pi}}$ where $\tau_{i}^{\pi}(t) = \tau^{\pi}(t)_{i}$.
	In addition to this, for every agent $j$, consider the \emph{memoryless} strategy $\strElm_{-j}^{\pun}: \St \to \AcProf_{-j}$ that minimizes the payoff of agent $j$ in every state $s \in \St$.
	Such strategy can also be decomposed and \emph{distributed} to the agents different from $j$ as $\strElm_{-j}^{\pun, i}(s) = \strElm_{-j}^{\pun}(s)_{i}$ for every $i \in \Ag \setminus \{j\}$.
	Now, for every agent $i$, consider the strategy $\strElm_{i} = \tuple{Q_i, q^0_i, \delta_i, \tau_{i}}$ defined as follows:
	
	\begin{itemize}
		\item
			$Q_i = T \times S \times (\{\top\} \cup \Ag \setminus \{i\})$;
			
		\item
			$q_i^0 = (t_0, s_0, \top)$;
			
		\item
			$\delta_i$ is defined as follows:
			
				$
				\delta_i(t, s, \top, \jact) =
				\begin{cases}
				(\delta^{\pi}(t, \jact), \trnFun(s, \jact), \top), & \text{ if } \jact = \tau^{\pi}(t) \\
				(\delta^{\pi}(t, \jact), \trnFun(s, \jact), j), & \text{ if } \jact_{-j} = (\tau^{\pi}(t))_{-j} \text{ and } \jact_{j} \neq (\tau^{\pi}(t))_{j}
				\end{cases}
				$

				$\delta_i(t, s, j, \jact) = (\delta^{\pi}(t, \jact), \trnFun(s, \jact), j)$
			
		\item
			$\tau_{i}(t, s, \iota) = 
			\begin{cases}
			\tau^{\pi}_{i}(t) & \text{ if } \iota = \top \\
			\strElm_{ - \iota}^{\pun, i}(s) & \text{ otherwise }
			\end{cases}$~\footnote{Note that we should define the internal and action functions on their entire domains.
			However, their definition for the other cases is irrelevant in the proof.}
		
	\end{itemize}

	Intuitively, the strategy $\strElm_{i}$ mimics the transducer $\Trans^{\pi}$ to produce the play $\pi$.
	In addition to this, it keeps track of the actions taken by the other agents, checking whether they adhere to the transducer or they deviate unilaterally from it.
	In case of a deviation of agent $j$, the strategy $\strElm_{i}$ flags the deviating agent and switches from mimicking $\Trans^{\pi}$ to adopting the punishment strategy $\strElm_{j}^{\pun}$.
	Clearly, the strategy profile $\strpElm = \tuple{\strElm{1}, \ldots, \strElm_{n}}$ is such that $\pi(\strpElm) = \pi$.
	It remains to show that it is a Nash Equilibrium.
	Note that, since every strategy $\strElm_{i}$ adopts the punishment for agent $j$ at every possible deviation.
	Note that, being $\mp$ a prefix independent condition, the payoff for agent $j$ is punished no matter at which instant the punishment strategy is started being adopted.
	At this point, being every pair $(s_k, \jact^k)$ in $\pi$ $z_j$-secure for agent $j$, it holds that every deviation of agent $j$ does not ensure a payoff greater than $z_j$, that is $\pay_{j}(\strpElm_{-j}, \strpElm_{j}') \leq z_j$.
	On the other hand, from condition (b) of item 2 in the statement, we have that $z_j \leq \pay_{j}(\strpElm)$.
	By putting these two conditions together, we obtain
	
	\[
	\pay_{j}(\strpElm_{-j}, \strpElm_{j}') \leq z_j \leq \pay_{j}(\strpElm)\text{.}
	\]
	This proves that every deviation of agent $j$ from $\strpElm$ is not beneficial, and so that $\strpElm$ is a Nash Equilibrium. \qed
\end{proof}

The characterization of Nash Equilibria provided in Theorem~\ref{thm:pthfinding} allows us to turn the \enash problem for \MP games into a path finding problem over $\Game$. 
Similarly to the case of \GRone games, we have the following procedure.

\begin{enumerate}
	\item 
	For every $i\in\Ag$ and $s \in \St$, compute the value $\pun_i(s)$;
	
	\item 
	Guess a vector $z \in \SetR^{\Ag}$ of values, each of them being a punishment value for a player $i$;
	
	\item 
	Compute the game $\Game{[z]}$ by removing the states $s$ such that $\pun_i(s) \leq z_i$ for some player $i$ and the transitions $(s, \jact)$ that are not $z_i$ secure for some player $i$;
	
	\item
	Find an ultimately periodic path $\pi$ in game $\Game{[z]}$ such that $\pi \models \varphi$ and $z_i \leq \pay_i(\pi)$ for every player $i \in \Ag$.
\end{enumerate}

Step~1 can be done in \np for every pair $(i,s)$, step~2 can be done in exponential time and polynomial space in the number of $z$-secure values, and step~3 can be done in polynomial time, similar to the case of  \GRone games.
Regarding the last step, its complexity depends on the specification language.
For the case of $\varphi$ being an \LTL formula, consider the formula 
$$\varphi_{\enash} := \varphi \wedge \bigwedge_{i \in \Ag} (\MP(i) \geq z_i) \text{,}$$
written in the language $\LTL^{\mthfun{Lim}}$, an extension of \LTL where statements about mean-payoff values over a given weighted arena can be made~\cite{BCHK14}.
Observe that formula $\varphi_{\enash}$ corresponds exactly to requirement $2(b)$ in Theorem~\ref{thm:pthfinding}.
Moreover, since every path in $\Game{[z]}$ satisfies condition $2(a)$ by construction, every path that satisfies $\varphi_{\enash}$ is a solution of the \enash problem and \emph{vice versa}.
We can solve the latter problem by model checking the formula against the arena underlying $\Game{[z]}$.
Since this can be done in \pspace~\cite{BCHK14}, we have the following result. 

\begin{corollary}
	\label{cor:enashmpltl}
	The \enash problem for \MP games with an \LTL  specification formula $\varphi$ is \pspaceC.
\end{corollary}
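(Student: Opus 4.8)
The plan is to prove the two directions separately: membership in \pspace by verifying that the four-step procedure stated just before the corollary runs in polynomial space, and \pspace-hardness by a reduction from \LTL model checking that mirrors the argument used for Corollary~\ref{cor:enashgroneltl}.

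For the upper bound, I would show that each step fits within polynomial space and then absorb the nondeterministic guess using $\mathsf{NPSPACE} = \pspace$ (Savitch's theorem). Step~1 computes, for each player $i$ and state $s$, the punishment value $\pun_i(s)$ of the associated two-player zero-sum mean-payoff game; since this value lies in $\np \cap \mathsf{co}\np$, it is computable in polynomial space, and there are only polynomially many pairs $(i,s)$, so the whole table of punishment values can be stored (or recomputed on demand) within polynomial space. Step~2 guesses a vector $\vec{z} \in \SetR^{\Ag}$ whose components are drawn from the polynomially sized sets $\{\pun_i(s): s \in \St\}$, a nondeterministic choice using polynomial space. Step~3 prunes $\Game$ according to the $z_i$-security conditions to obtain $\Game[z]$, a polynomial-time (hence polynomial-space) operation that only consults the already-available punishment values. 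Step~4 is the crux: following the discussion preceding the corollary, I would form $\varphi_{\enash} = \varphi \wedge \bigwedge_{i \in \Ag}(\MP(i) \geq z_i)$ in $\LTLlim$, observe via Theorem~\ref{thm:pthfinding} that a surviving ultimately periodic path witnessing condition~$2$ exists iff $\varphi_{\enash}$ is satisfiable along the arena of $\Game[z]$, and invoke the \pspace model-checking result of~\cite{BCHK14}. Interleaving the guess of $\vec z$ with the deterministic and \pspace subroutines places the whole computation in $\mathsf{NPSPACE}$, which equals \pspace.

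For the lower bound, I would reduce from the \LTL model-checking problem $\Kripke \models \varphi$, which is \pspaceC. Given $\Kripke$ and $\varphi$, I construct an \MP game $\Game$ whose arena is $\Kripke$ (encoding its transition relation by letting actions select successors, exactly as in Corollary~\ref{cor:enashgroneltl}) and whose weight functions are identically $0$, so that $\pay_i(\pi) = 0$ for every player $i$ on every path. Then no player can ever profit from a deviation, and hence $\NE(\Game)$ is the set of \emph{all} strategy profiles, whose outcomes range over exactly the ultimately periodic paths of the arena. Taking the specification to be $\neg\varphi$, the \enash instance $(\Game, \neg\varphi)$ has a positive answer iff some path of $\Kripke$ falsifies $\varphi$, i.e.\ iff $\Kripke \not\models \varphi$. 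Since \pspace is closed under complement, this yields \pspace-hardness.

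The main obstacle I anticipate is the bookkeeping in the upper bound: making sure that the three potentially expensive ingredients---the exponential space of values to guess in Step~2, the $\np \cap \mathsf{co}\np$ computation of each $\pun_i(s)$, and the \pspace model check of $\varphi_{\enash}$---genuinely compose within a \emph{single} polynomial-space bound, rather than merely each being in \pspace in isolation. The clean resolution is that the guess of $\vec z$ together with all subsequent deterministic and \pspace work sits inside $\mathsf{NPSPACE} = \pspace$, and that the punishment-value table is only polynomially large (or can be regenerated on demand), so no single intermediate object exceeds the space budget.
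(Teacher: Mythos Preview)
Your proposal is correct and follows essentially the same route as the paper: the upper bound is exactly the four-step procedure discussed before the corollary, culminating in the \pspace model-checking of $\varphi_{\enash}$ in $\LTLlim$ from~\cite{BCHK14}, and your lower bound via \LTL model checking (using constant-zero weights so that every profile is an equilibrium) is the natural adaptation of the argument given for Corollary~\ref{cor:enashgroneltl}. The paper leaves the hardness direction implicit for this corollary, but your reduction is precisely what is intended.
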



As for the case of \GRone games, we can summarize the procedure in the following algorithm (Algorithm~\ref{alg:enashmp}).

\begin{algorithm}
	\caption{\label{alg:enashmp} \enash of \MP games.}
	\textbf{Input}: A game $\Game_{\MP}$ and a specification formula $\varphi$.
	
	
	\For{$i \in \Ag$ and $s \in \St$}{
		Compute $\pun_i(\Game)$	
	}
	\For{$\vec{z} \in \{\pun_i(s): s \in \St\}^{\Ag}$}{
		Compute $\Game{[z]}$
		
		\If{$\pi \models \varphi_{\enash}$ for some $\pi \in \Game{[z]}$}{
			\Return $\mthfun{Accept}$
		}
	}
	\Return $\mthfun{Reject}$
\end{algorithm}

\vspace*{1ex}\noindent \textbf{The special case of \GRone specifications.}
As in the case of \GRone games, here we show that restricting the specification language to \GRone also lowers the complexity for \MP games. The reason for this is that the path finding problem for \GRone specifications can be done while avoiding model-checking an $\LTL^{\mthfun{Lim}}$ formula. 
In order to do this, we follow a different approach. Using an \MP game $\Game$ and a \GRone specification $\phi$ we define a linear program such that the linear program has a solution if and only if the pair $(\Game,\phi)$ is an instance of \enash. In particular, this approach is similar to the technique used in \cite[Theorem 2]{GMPRW17}, where Linear Programming is used to find the complexity of solving a variant of \enash. 
Formally, we have the following result. 

\begin{theorem}
	\label{thm:enashmpgrone}
	The \enash problem for \MP games with a \GRone specification $\varphi$ is \npC.
\end{theorem}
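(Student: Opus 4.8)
The plan is to prove the two directions separately, establishing membership in \np\ through a guess-and-verify procedure whose verification phase is a linear program, and \np-hardness by reduction from a known hard problem on multi-player \MP\ games. For membership I would build on the characterisation of Theorem~\ref{thm:pthfinding}: the instance is positive iff there exist a punishment vector $\vec{z}$, with each $z_i$ equal to some $\pun_i(s)$, and an ultimately periodic path $\pi$ that is $z_i$-secure at every position for every player $i$, satisfies $z_i \le \pay_i(\pi)$ for every $i$, and in addition satisfies $\pi \models \varphi$. A nondeterministic algorithm guesses, for each player $i$, a \emph{memoryless} strategy for the coalition $\Ag \setminus \{i\}$ and a memoryless strategy for $i$ that are uniformly optimal; by memoryless determinacy of two-player zero-sum mean-payoff games~\cite{ZP96} these witnesses certify all values $\pun_i(s)$ in polynomial time (fixing the coalition strategy leaves a one-player optimisation, and symmetrically for $i$), and the algorithm then guesses each $z_i$ among the certified values. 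With the $\pun_i(\cdot)$ in hand, the reduced game $\Game{[z]}$ is built in polynomial time, and, since every path living in $\Game{[z]}$ is automatically $z_i$-secure everywhere, it remains only to decide whether $\Game{[z]}$ carries an ultimately periodic path from $s_0$ that satisfies $\varphi$ and whose mean-payoff is at least $z_i$ in every coordinate.

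Here the \GRone\ shape of $\varphi = (\bigwedge_{l=1}^{m} \always\sometime \psi_l) \to (\bigwedge_{r=1}^{n} \always\sometime \phi_r)$ is decisive. For an ultimately periodic path, satisfaction of $\varphi$ depends only on the states visited infinitely often, that is, on the cycle the path eventually follows: $\pi \models \varphi$ iff some $\psi_l$ holds nowhere on that cycle, or every $\phi_r$ holds somewhere on it. I would therefore guess which disjunct is witnessed -- an index $l^\star$ whose $\psi_{l^\star}$-states the recurrent part must avoid, or the demand that all $\phi_r$-states be visited -- together with a strongly connected subgraph $U$ of $\Game{[z]}$ that is reachable from $s_0$ within $\Game{[z]}$ and in which the path eventually cycles. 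Inside $U$ I set up a flow linear program with a nonnegative variable $x_e$ for each edge $e$, imposing flow conservation at every state, the normalisation $\sum_{e} x_e = 1$, the mean-payoff constraints $\sum_{e} x_e\, \wFun_i(\src(e)) \ge z_i$ for every $i \in \Ag$, and, in the second disjunct, the positivity constraint $\sum_{e:\, \src(e) \models \phi_r} x_e > 0$ for each $r$ (in the first disjunct the $\psi_{l^\star}$-states are deleted from $U$ beforehand). This program has polynomial size and its feasibility is decided in polynomial time, so the overall procedure runs in \np.

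The correctness of this step rests on a lemma asserting that the linear program is feasible exactly when $U$ admits an ultimately periodic path whose liminf mean-payoff meets $\vec{z}$ coordinatewise and which visits the prescribed states infinitely often. One direction is routine, as any such path induces a frequency vector satisfying the constraints. The converse must turn a feasible rational solution into a concrete path: its support decomposes into cycles which, because $U$ is strongly connected, splice into a single cycle whose \emph{average} weight vector equals the corresponding convex combination and hence dominates $\vec{z}$; iterating that cycle yields a path on which the liminf and the limit of each coordinate average coincide with the cycle average. I expect the main obstacle to be making this watertight for the \emph{liminf} payoff in all $\card{\Ag}$ dimensions at once -- which forces one to collapse the whole solution onto a single recurrent cycle and to make the connecting paths through $U$ asymptotically negligible -- and to check that the finite prefix from $s_0$ into $U$ disturbs neither the prefix-independent mean-payoff nor the infinitely-often conditions. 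This is precisely where \GRone\ replaces the \pspace\ \LTLlim\ model-checking of the general \LTL\ case (Corollary~\ref{cor:enashmpltl}) by a polynomial linear program, in the spirit of the technique of~\cite[Theorem~2]{GMPRW17}.

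For the lower bound I would reduce from an \np-complete problem on multi-player \MP\ games, such as deciding the existence of a Nash equilibrium whose payoff profile meets prescribed bounds~\cite{UW11}. The reduction leaves the arena and the weight functions essentially intact and encodes the target behaviour through the \GRone\ specification: fresh atomic propositions mark the states witnessing the desired profile, and $\varphi$ demands that these be visited infinitely often, so that the equilibria selected by \enash\ are exactly those meeting the prescribed bounds. The only delicate point is to express the quantitative thresholds structurally, through the weights and the secure-value mechanism of Theorem~\ref{thm:pthfinding}, rather than inside the qualitative formula. Combining this with the membership argument establishes \npC.
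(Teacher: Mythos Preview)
Your proposal is correct and follows essentially the same approach as the paper: both invoke the characterisation of Theorem~\ref{thm:pthfinding}, guess the punishment values and build $\Game{[z]}$, exploit the disjunctive shape of \GRone\ (either some $\psi_l$ is eventually avoided or every $\theta_r$ is visited infinitely often), and reduce the remaining path-finding to a polynomial-size flow linear program with per-player mean-payoff constraints, citing the technique of~\cite{GMPRW17}. The only notable differences are cosmetic: you explicitly guess a reachable strongly connected component before writing the LP (the paper leaves connectivity implicit), and your lower bound is more elaborate than necessary---the paper simply takes $\varphi$ to be a tautology, so that \enash\ collapses to Nash-equilibrium existence in multi-player \MP\ games, which is \np-hard by~\cite{UW11}.
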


\begin{proof}
	We will define a linear program of size polynomial in $\Game$ having a solution if and only if there exists an ultimately periodic path whose payoff for every player $i$ is at least a minimum threshold $z_i$ and satisfies the \GRone specification.
	
	In order to do that, first recall that $\varphi$ has the following form
	$$
	\varphi = \bigwedge_{l = 1}^{m} \always \sometime \psi_{l} \to \bigwedge_{r = 1}^{n} \always \sometime \theta_{r}\text{,}
	$$
	and let $V(\psi_{l})$ and $V(\theta_r)$ be the subset of states in $\Game$ that satisfy the Boolean combinations $\psi_{l}$ and $\theta_{r}$, respectively.
	Observe that property $\varphi$ is satisfied over a path $\pi$ if, and only if, either $\pi$ visits every $V(\theta_r)$ infinitely many times or visits some of the $V(\psi_{l})$ only a finite number of times.
	
	For the game $\Game{[z]}$, let $\tuple{V, E, (\wFun_{i}')_{i \in \Ag}}$ be the underlying graph, where $ \wFun_{i}'(v) = \wFun_{i}(v) - z_i $ for every $ i \in \Ag$, and $v \in V \subseteq \St$.
	Furthermore, for every edge $e\in E$, we introduce a variable $x_e$.
	Informally, the value $x_e$ is the number of times that the edge $e$ is used on a cycle.
	Formally, let:
	
	\begin{itemize}
		\item
			$\src(e) = \{v \in V : \exists w\, e = (v,w) \in E\}$;
			
		\item
			$\trg(e) = \{v \in V : \exists w\, e = (w,v) \in E\}$;
			
		\item
			$\OUT(v) = \{e \in E : \src(e) = v\}$;
		
		\item
			$\IN(v) = \{e \in E : \trg(e) = v\}$.
	\end{itemize}
	
	Consider $\psi_{l}$ for some $1 \leq l \leq m$, and define the linear program $\LP(\psi_{l})$ with the following inequalities and equations:
	\begin{enumerate}
		\itemsep1.3em
		\item[Eq1:]
		$x_e \geq 0$ for each edge $e$ 
		
		\hfill a basic consistency criterion;
		
		\item[Eq2:]
		$\Sigma_{e \in E} x_e \geq 1$ 
		
		\hfill ensures that at least one edge is chosen;
		
		\item[Eq3:]
		for each $i \in \Ag$, $\Sigma_{e \in E} \wFun_i'(\src(e)) x_e \geq 0$ 
		
		\hfill ensures that the total sum of any solution is positive;
		
		\item[Eq4:]
		$\Sigma_{\src(e) \cap V(\psi_{l}) \neq \emptyset} x_e = 0$ 
		
		\hfill ensures that no state in $V(\psi_{l})$ is in the cycle associated with the solution;
		
		\item[Eq5:]
		for each $v \in V$, $\Sigma_{e \in \OUT(v)} x_e = \Sigma_{e \in \IN(v)} x_e$  
		
		\hfill says that the number of times one enters a vertex is equal to the number of times one leaves that vertex.	
	\end{enumerate}
	
	By construction, it follows that $\LP(\psi_{l})$ admits a solution if and only if there exists a path $\pi$ in $\Game$ such that $z_i \leq \pay_i(\pi)$ for every player $i$ and visits $V(\psi_{l})$ only {\em finitely many times}.
	Note that the condition $z_i \leq \pay_i(\pi)$ is ensured by Eq3.
	Indeed, the value of a path $\pi$ in $\Game{[z]}$ that is represented in a solution to $\LP(\psi_{l})$, and thus satisfying Eq3, is such that $0 \leq \pay^{\Game{[z]}}_{i}(\pi)$, with $pay^{\Game{[z]}}_{i}$ representing the payoff function for agent $i$ in the game $\Game{[z]}$.
	Now observe that, as the weights in $\Game{[z]}$ are all downshifted by a value $z_i$ for every agent $i$, it holds that $\pay_{i}(\pi) = \pay^{\Game{[z]}}_{i}(\pi) + z_i$, which in turns implies that $z_i \leq \pay_i(\pi)$.
	
	Now, consider also the linear program $\LP(\theta_{1}, \ldots, \theta_{n})$ defined with the following inequalities and equations: 
	
	\begin{enumerate}
		\itemsep1.3em
		\item[Eq1:]
		$x_e \geq 0$ for each edge $e$ 
		
		\hfill a basic consistency criterion;
		
		\item[Eq2:]
		$\Sigma_{e \in E} x_e \geq 1$ 
		
		\hfill 
		ensures that at least one edge is chosen;
		
		\item[Eq3:]
		for each $i \in \Ag$, $\Sigma_{e \in E} \wFun_i'(\src(e)) x_e \geq 0$ 
		
		\hfill ensures that the total sum of any solution is positive;
		
		\item[Eq4:]
		for all $1 \leq r \leq n$, $\Sigma_{\src(e) \cap V(\theta_{r}) \neq \emptyset} x_e \geq 1$ 
		
		\hfill ensures that for every $V(\theta_{r})$ at least one state is in the cycle;
		
		\item[Eq5:]
		for each $v \in V$, $\Sigma_{e \in \OUT(v)} x_e = \Sigma_{e \in \IN(v)} x_e$  
		
		\hfill says that the number of times one enters a vertex is equal to the number of times one leaves that vertex.	
	\end{enumerate}
	
	In this case, $\LP(\theta_{1}, \ldots, \theta_{n})$  admits a solution if and only if there exists a path $\pi$ such that $z_i \leq \pay_i(\pi)$ for every player $i$ and visits every $V(\theta_{r})$ {\em infinitely many times}.
	
	Since the constructions above are polynomial in the size of both $\Game$ and $\phi$, we can conclude it is possible to check in \np the statement that there is a path $\pi$ satisfying $\varphi$ such that $z_i \leq \pay_i(\pi)$ for every player~$i$ in the game if and only if one of the two linear programs defined above has a solution. For the lower bound, we use \cite{UW11} and observe that if $\phi$ is true, then the problem is equivalent to checking whether the \MP game has a Nash equilibrium. \qed
\end{proof}



\section{Social welfare verification}
\label{secn:social-welfare}

Until this point, the problems considered primarily concerned about
the satisfaction of a temporal logic property $\varphi$ over the game
$\Game$.  However, one might be interested in achieving an outcome
that is somehow best also for the agent society.  To capture this
setting, we introduce \emph{social welfare} measures. Social welfare
measures are \emph{aggregate} measures of utility. Thus, a social welfare
measure takes as input a profile of utilities, one for each player in
the game, and somehow aggregates these into an overall measure, indicating how
good the outcome is for society as a whole. Note that since social welfare is
inherently a quantitative measure, in this section we restrict our attention to \MP games.

Formally, for a game $\Game$ with a set $N$ of agents, a \emph{social
  welfare function} $\sw$ takes the form
\begin{center}
	$\sw: \SetR^{N} \to \SetR$
\end{center}
\noindent Thus, a social welfare function maps a $N$-tuple of real numbers into a real number which represents the \emph{aggregated payoff}.  More specifically, for a
strategy profile $\strpElm$, the social welfare of $\strpElm$ is given
by $\sw(\pay_{1}(\strpElm), \ldots, \pay_{N}(\strpElm))$.  With an
abuse of notation, we denote $\sw(\strpElm)$ the social welfare of
$\strpElm$.  Many different social welfare functions have been
proposed in the literature of economic theory.  Here, we confine out
attention to the two best known: \emph{utilitarian} and
\emph{egalitarian} social welfare. These functions are defined as
follows:

\begin{itemize}
	\item
		The \emph{utilitarian} social welfare function is given by $\usw(\strpElm) = \sum_{i\in N} \pay_{i}(\strpElm)$.
		
	\item
		The \emph{egalitarian} social welfare function is given by $\esw(\strpElm) = \min_{i\in N} \{\pay_{i}(\strpElm)\}$.
\end{itemize}

For simplicity, for a given game $\Game$ and a formula $\varphi$, by $\enash_{\Game}(\varphi) = \set{\strpElm \in \NE}{\pi(\strpElm) \models \varphi}$ we denote the set of Nash equilibria that satisfy $\varphi$, that is, that are a solution to the \enash problem of $(\Game, \varphi)$.
For a fixed social welfare function $\sw$ on a game $\Game$, by:

\begin{itemize}
	\item
		$\MaxENash_{\sw}(\Game,\varphi) = \max_{\strpElm \in \enash_{\Game}(\varphi)}\{\sw(\strpElm)\}$, and
		
	\item
		$\MinENash_{\sw}(\Game,\varphi) = \min_{\strpElm \in \enash_{\Game}(\varphi)}\{\sw(\strpElm)\}$
\end{itemize}
\noindent we denote the maximal and minimal social welfare achieved over a Nash equilibrium profile, respectively, satisfying a given specification $\varphi$.

The values of $\MaxENash$ and $\MinENash$ determine how good or bad
the \enash solutions are from the perspective of the agents in the
game collectively.  Here, we consider both the decision and function
problem.

\begin{definition}[Threshold social welfare]
	For a given \MP game $\Game_{\MP}$, a social welfare function $\sw$, and a threshold value $t$, decide whether there exists a strategy profile $\strpElm$ in $\enash_{\Game}(\varphi)$ such that $t \leq \sw(\strpElm)$.
	In case of a positive answer to this decision question, the pair $(\Game, \varphi)$ is called $t$\emph{-increase}.
	
	Analogously, decide whether there exists a strategy profile $\strpElm$ in $\enash_{\Game}(\varphi)$ such that $t \geq \sw(\strpElm)$.
	In case of a positive answer to this decision question, the pair $(\Game, \varphi)$ is called $t$\emph{-decrease}.
\end{definition}

\begin{definition}[Max and Min social welfare]
	For a given \MP game $\Game_{\MP}$ and a social welfare function $\sw$, compute $\MaxENash_{\sw}(\Game, \varphi)$ and $\MinENash_{\sw}(\Game, \varphi)$.
\end{definition}

The two definitions above can be instantiated with many different social welfare functions. In the following two subsections, we consider them in the context of the utilitarian and egalitarian welfare measures defined above.


\subsection{Social welfare computation with \LTL specifications}

We first show how to check that a given \MP game $\Game_{\MP}$ and a \LTL specification meets a given threshold $t$.
As the utilitarian and egalitarian functions require different proofs, we address them separately.
For the utilitarian function, we have the following.

\begin{theorem}\label{thm:ltl-sw}
	For a given \MP game $\Game_{\MP} = \tuple{A, (\wFun_{i})_{i \in \Ag}}$, an \LTL specification $\varphi$, and a threshold value $t$, deciding whether there exists a strategy profile $\strpElm \in \enash_{\Game}(\varphi)$ such that $t \leq \usw(\strpElm)$ is \pspaceC.
	Analogously, deciding whether there exists a strategy profile $\strpElm \in \enash_{\Game}(\varphi)$ such that $t \geq \usw(\strpElm)$ is \pspaceC.
\end{theorem}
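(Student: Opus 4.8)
The plan is to extend the decision procedure behind Corollary~\ref{cor:enashmpltl} by folding the social-welfare threshold into its $\LTLlim$ model-checking step. By Theorem~\ref{thm:pthfinding} it suffices to search for an \emph{ultimately periodic} witness, and this restriction is exactly what makes the utilitarian constraint expressible: on an ultimately periodic path the $\liminf$ defining $\MP$ is a genuine limit on the periodic part, so a finite sum commutes with it. Hence, for such a path $\pthElm$,
\[
\usw(\pthElm) \;=\; \sum_{i \in \Ag} \MP(\wFun_i(\pthElm)) \;=\; \MP\!\left(\textstyle\sum_{i \in \Ag} \wFun_i(\pthElm)\right).
\]
Introducing on the arena a single extra weight dimension $\wFun_{\Sigma}(s) = \sum_{i \in \Ag} \wFun_i(s)$ (the weight of a ``virtual'' agent $\Sigma$), the utilitarian welfare of any ultimately periodic path equals the mean-payoff $\MP(\Sigma)$ of $\wFun_\Sigma$, a quantity about which $\LTLlim$ can reason.

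For membership in \pspace\ I would run steps~1--3 of the \enash procedure verbatim: compute every $\pun_i(s)$, iterate over all guesses $\vec{z} \in \{\pun_i(s) : s \in \St\}^{\Ag}$ (each storable in polynomial space), and form the game $\Game[z]$. For each $\vec{z}$ I replace the formula $\varphi_{\enash}$ by
\[
\varphi_{\usw}^{\geq t} \;:=\; \varphi \;\wedge\; \bigwedge_{i \in \Ag}\big(\MP(i) \geq z_i\big) \;\wedge\; \big(\MP(\Sigma) \geq t\big)
\]
and model-check it against the arena underlying $\Game[z]$. Conditions $2(a)$--$2(b)$ of Theorem~\ref{thm:pthfinding} are handled exactly as in Corollary~\ref{cor:enashmpltl} (security by construction of $\Game[z]$, and $z_i \leq \pay_i(\pthElm)$ by the $\MP(i)\geq z_i$ conjuncts), while the displayed identity guarantees that the new conjunct holds precisely when $\usw(\pthElm) \geq t$. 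Since $\LTLlim$ model checking is in \pspace~\cite{BCHK14}, we add only one weight dimension, and the loop over $\vec{z}$ reuses polynomial space, the whole procedure runs in \pspace. The $t$-decrease case is identical, replacing the last conjunct by $\big(\MP(\Sigma) \leq t\big)$.

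For the lower bound I would reduce from the \enash problem for \MP games with an \LTL specification, which is \pspaceC\ by Corollary~\ref{cor:enashmpltl}. Given an instance $(\Game, \varphi)$, note that the utilitarian welfare of \emph{every} play is bounded below by $t_0 = \sum_{i \in \Ag} \min_{s \in \St} \wFun_i(s)$, computable in polynomial time. Taking $t = t_0$ makes the constraint $t \leq \usw(\strpElm)$ vacuous, so $(\Game, \varphi)$ is a positive instance of \enash iff $(\Game, \varphi, t_0)$ is a positive instance of the $t$-increase problem; symmetrically, the upper bound $\sum_{i \in \Ag} \max_{s \in \St} \wFun_i(s)$ gives hardness of the $t$-decrease problem.

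The step I expect to be the main obstacle is justifying that the single mean-payoff term $\MP(\Sigma)$ faithfully captures $\usw$. This rests entirely on the restriction to ultimately periodic paths supplied by Theorem~\ref{thm:pthfinding}: for arbitrary paths one only has $\sum_i \liminf \leq \liminf \sum_i$, so collapsing the welfare constraint into a single $\LTLlim$ term would be unsound in general. Care is also needed to confirm that augmenting the weighted arena with the extra dimension $\wFun_\Sigma$ keeps the $\LTLlim$ instance polynomial, so that the \pspace\ bound of~\cite{BCHK14} still applies.
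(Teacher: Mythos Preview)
Your proposal is correct and follows essentially the same approach as the paper: both introduce a single additional weight $\wFun_{\Sigma}(s)=\sum_{i}\wFun_i(s)$ (the paper packages this as a dummy $(n{+}1)$-th player whose actions do not affect the transition function) and then append the conjunct $\MP(\Sigma)\geq t$ to the $\LTLlim$ formula used in the \enash procedure, invoking the \pspace model-checking bound of~\cite{BCHK14}. The only cosmetic differences are that the paper derives hardness directly from \LTL model checking rather than from Corollary~\ref{cor:enashmpltl}, and that you are more explicit than the paper about why the identity $\sum_i \MP(\wFun_i(\pthElm))=\MP(\wFun_\Sigma(\pthElm))$ is justified (namely, the restriction to ultimately periodic witnesses supplied by Theorem~\ref{thm:pthfinding}).
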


\begin{proof}
	It is enough to show the case $t \leq \usw(\strpElm)$ as the other one is similar.
	The solution is a slight modification of the \enash problem for \MP games with \LTL specifications.
	Consider the arena $A' = \tuple{\Ag \cup \{n + 1\},  \Ac, \St, s_0, \trnFun', \labFun}$ with $\trnFun'$ defined as
	
	\[
	\trnFun'(\act_{1}, \dots, \act_{n}, \act_{n + 1}) = \trnFun(\act_{1}, \dots, \act_{n})
	\]
	\noindent for every $(\act_{1}, \dots, \act_{n}, \act_{n + 1}) \in \Ac^{\card{\Ag} + 1}$, and the \MP game $\Game_{\MP}' = \tuple{A', (\wFun_{i})_{i \in \Ag}, (\wFun_{n+1})}$ with $\wFun_{n+1}(s) = \sum_{i\in \Ag}(\wFun_{i}(s))$ for every $s\in \St$.
	
	Intuitively, we have included an extra agent in the game, having no effect/impact on the executions, in a way that it carries information about the social welfare of the original game.
	Indeed, observe that, for every strategy profile $\strpElm$ in $\Game_{\MP}'$, it holds that 
	
	\[
	\pay_{n + 1}'(\strpElm) = \sum_{i\in \Ag}\pay_{i}'(\strpElm) = \sum_{i\in \Ag}\pay_{i}(\strpElm_{-(n + 1)}) = \usw(\strpElm_{-(n + 1)})
	\]

	We can employ the same construction for solving the \enash problem for \MP games with \LTL specifications to solve the threshold problem.
	It suffices to replace the \LTLlim formula $\varphi_{\enash}$ with 
	
	\[
	\varphi_{\enash}^{\usw, t} := \varphi_{\enash} \wedge \MP(n + 1) \geq t\text{.}
	\]
	
	The computational complexity of the procedure is \pspace as for \enash.
	The lower bound easily follows from the model checking of \LTL. \qed
\end{proof}
For the case of egalitarian social welfare, we have the following.

\begin{theorem}
	For a given \MP game $\Game_{\MP} = \tuple{A, (\wFun_{i})_{i \in \Ag}}$, an \LTL specification $\varphi$, and a threshold value $t$, deciding whether there exists a strategy profile $\strpElm \in \enash_{\Game}(\varphi)$ such that $t \leq \esw(\strpElm)$ is \pspaceC.
	Analogously, deciding whether there exists a strategy profile $\strpElm \in \enash_{\Game}(\varphi)$ such that $t \geq \esw(\strpElm)$ is \pspaceC.
\end{theorem}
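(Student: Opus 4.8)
The plan is to mirror the utilitarian case of Theorem~\ref{thm:ltl-sw}, introducing a single additional dummy agent $n+1$ that has no influence on transitions but whose payoff tracks the egalitarian social welfare $\esw(\strpElm) = \min_{i \in \Ag} \pay_i(\strpElm)$. The essential difference, and the main obstacle, is that $\min$ is not a linear (additive) function of the underlying weights in the way $\sum$ is: there is no single weight function $\wFun_{n+1}(s)$ on states whose mean-payoff equals the minimum of the mean-payoffs of the other agents, because the mean-payoff operator and the pointwise $\min$ do not commute. Therefore I cannot simply set $\wFun_{n+1}(s) = \min_i \wFun_i(s)$.

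To sidestep this, the key observation is that we do not need a weight encoding the minimum; we need only to \emph{test} the threshold $t \leq \esw(\strpElm)$, and $\min_{i \in \Ag} \pay_i(\strpElm) \geq t$ is equivalent to the conjunction $\bigwedge_{i \in \Ag} \pay_i(\strpElm) \geq t$. Since the construction underlying Corollary~\ref{cor:enashmpltl} already works with the $\LTLlim$ formula $\varphi_{\enash} := \varphi \wedge \bigwedge_{i \in \Ag}(\MP(i) \geq z_i)$, I would simply strengthen each conjunct. Concretely, I would run the same guess-and-check procedure (guess the winners/punishment vector $\vec{z}$, compute $\Game{[z]}$, and find an ultimately periodic path), but replace $\varphi_{\enash}$ with
\[
\varphi_{\enash}^{\esw, t} := \varphi_{\enash} \wedge \bigwedge_{i \in \Ag}\big(\MP(i) \geq t\big)\text{,}
\]
again expressible in $\LTLlim$. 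A path satisfies $\varphi_{\enash}^{\esw,t}$ exactly when it is an \enash solution whose every individual payoff is at least $t$, which is precisely $t \leq \esw(\strpElm)$. Model-checking this $\LTLlim$ formula against the arena underlying $\Game{[z]}$ remains in \pspace by~\cite{BCHK14}.

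For the complementary direction $t \geq \esw(\strpElm)$, I would note that $\min_i \pay_i(\strpElm) \leq t$ is \emph{not} a conjunction but a disjunction $\bigvee_{i \in \Ag}(\MP(i) \leq t)$, so I would instead write
\[
\varphi_{\enash}^{\esw, t, \leq} := \varphi_{\enash} \wedge \bigvee_{i \in \Ag}\big(\MP(i) \leq t\big)\text{,}
\]
which is still an $\LTLlim$ formula over the same weighted arena, and the same \pspace{} model-checking procedure applies. In both directions the upper bound is inherited from the \enash construction for \MP games with \LTL specifications, and the \pspace{} lower bound follows, as before, from the hardness of \LTL model checking (taking all goals to be tautologies and a trivial threshold so that every \enash solution qualifies). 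The only subtle point to verify carefully is that encoding $\esw$ as a Boolean combination of $\MP(i)$-comparisons stays within the expressive and complexity bounds of $\LTLlim$, which it does since these are exactly the atomic mean-payoff predicates that $\LTLlim$ is designed to express, and closure under finite $\wedge$ and $\vee$ is free.
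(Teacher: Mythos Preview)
Your proposal is correct and matches the paper's approach: both encode the egalitarian threshold $t \leq \esw(\strpElm)$ as the conjunction $\bigwedge_{i\in\Ag}\MP(i)\geq t$ appended to $\varphi_{\enash}$, and obtain the \pspace upper bound from \LTLlim model checking and the lower bound from \LTL model checking. You in fact spell out the $t \geq \esw(\strpElm)$ direction (via the disjunction $\bigvee_{i\in\Ag}\MP(i)\leq t$) more explicitly than the paper, which simply declares it ``similar''; your preliminary discussion of why a dummy agent with a pointwise-$\min$ weight would \emph{not} work is a correct observation, though not needed for the argument.
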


\begin{proof}
	It is enough to show the case $t \leq \esw(\strpElm)$ as the other one is similar.
	As for the case of utilitarian social welfare functions, the solution is a slight modification of the \enash problem for \MP games with \LTL specifications.
	Indeed, observe that we can specify that the payoff of agent $i$ is greater than the threshold $t$ by the \LTLlim formula $\MP(i) \geq t$.
	Therefore, specifying that the egalitarian social welfare is at least $t$ can be done by the conjunction $\bigwedge_{i \in \Ag} \MP(i) \geq t$.
	Thus, it suffice to replace the \LTLlim $\varphi_{\enash}$ for the \enash problem with
	
	\[
	\varphi_{\enash}^{\esw, t} := \varphi_{\enash} \wedge \bigwedge_{i \in \Ag} \MP(i) \geq t \text{.}
	\]
	
	Again, the computational complexity of the procedure is \pspace and the lower bound follows from the model checking of \LTL.
	\qed
\end{proof}

\subsection{Social welfare computation with \GRone specifications}
%
In this section, we address social welfare threshold problems with \GRone specifications. The techniques are similar to the ones used in the case of \LTL specifications. Firstly, we consider the utilitarian social welfare function. For a given \MP game $\Game_{\MP} = \tuple{A, (\wFun_{i})_{i \in \Ag}}$, we build the arena $A'$ and the game $\Game_{\MP}'$ analogous to the way it is done in the proof of Theorem~\ref{thm:ltl-sw}. Now, to solve the case $t \leq \usw(\strpElm)$, we adapt the procedure for solving \enash for \MP games with \GRone specifications (Theorem~\ref{thm:enashmpgrone}) as follows. We construct the corresponding multi-weighted graph $ W = \tuple{V,E,(\wFun_i')_{i \in \Ag \cup {n+1} }}$ where $ \wFun_{n+1}'(v) = \wFun_{n+1}(s) - t $. Then, solving \enash problem for such an instance corresponds exactly to the threshold social welfare problem $t \leq \usw(\strpElm)$. For the case $t \geq \usw(\strpElm)$, we simply define $ \wFun_{n+1}'(v) = t - \wFun_{n+1}(s) $. To obtain the lower bounds, we reduce from the \enash problem for \MP games with \GRone specifications. For the case $ t \leq \usw(\strpElm) $, we set $ t = \min \{ \wFun_{n+1}(s) : s \in \St \} $, and the other case, we fix $ t = \max \{ \wFun_{n+1}(s) : s \in \St \} $. Thus, we obtain the following result.

\begin{theorem}
	For a given \MP game $\Game_{\MP} = \tuple{A, (\wFun_{i})_{i \in \Ag}}$, a \GRone specification $\varphi$, and a threshold value $t$, deciding whether there exists a strategy profile $\strpElm \in \enash_{\Game}(\varphi)$ such that $t \leq \usw(\strpElm)$ is \npC.
	Analogously, deciding whether there exists a strategy profile $\strpElm \in \enash_{\Game}(\varphi)$ such that $t \geq \usw(\strpElm)$ is \npC.
\end{theorem}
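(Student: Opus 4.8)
The plan is to establish membership in \np by combining the ``meter agent'' construction from the proof of Theorem~\ref{thm:ltl-sw} with the linear-programming characterisation of Theorem~\ref{thm:enashmpgrone}, and to establish \np-hardness by a trivialising reduction from the \enash problem for \MP games with \GRone specifications. I would treat only the case $t \leq \usw(\strpElm)$ in detail, since the case $t \geq \usw(\strpElm)$ is symmetric and obtained by flipping signs. As a preliminary step I augment the game exactly as in Theorem~\ref{thm:ltl-sw}: pass to the arena $A'$ with an extra agent $n+1$ whose actions have no effect on $\trnFun$, and set $\wFun_{n+1}(s) = \sum_{i \in \Ag} \wFun_i(s)$. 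The property I would verify first is that on any ultimately periodic path $\pi$ one has $\pay_{n+1}(\pi) = \MP(\wFun_{n+1}(\pi)) = \sum_{i \in \Ag} \pay_i(\pi) = \usw(\pi)$; additivity holds here because on an eventually periodic sequence the Cesàro averages converge, so the $\liminf$ of a finite sum is the sum of the limits. Thus the payoff of agent $n+1$ faithfully meters utilitarian social welfare.

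Next I would reuse the \enash decision procedure of Theorem~\ref{thm:enashmpgrone}: guess the secure-value vector $\vec{z}$ for the original players, form $\Game{[z]}$, and run the linear programs $\LP(\psi_l)$ and $\LP(\theta_1,\dots,\theta_n)$ over the underlying multi-weighted graph $W = \tuple{V,E,(\wFun_i')_{i \in \Ag \cup \{n+1\}}}$. The only modification is to record the meter weights shifted by the threshold, $\wFun_{n+1}'(v) = \wFun_{n+1}(v) - t$, and to add to each linear program one further inequality of the shape of Eq3, namely $\Sigma_{e \in E} \wFun_{n+1}'(\src(e)) x_e \geq 0$. By the same reasoning used for Eq3 in Theorem~\ref{thm:enashmpgrone}, a flow $x$ satisfying this inequality corresponds to a cycle along which the average of $\wFun_{n+1} - t$ is nonnegative, i.e. $\usw(\pi) \geq t$. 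Hence the augmented program has a solution iff there is an ultimately periodic path that is punishment-secure for the original players at the guessed thresholds, satisfies $\varphi$, and has $\usw \geq t$; by Theorem~\ref{thm:pthfinding} such a path is exactly the outcome of some $\strpElm \in \enash_{\Game}(\varphi)$ with $t \leq \usw(\strpElm)$. Since the construction is polynomial in $\Game$ and $\varphi$ and the guessed vector ranges over polynomially many candidate values, the whole procedure runs in \np.

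For the case $t \geq \usw(\strpElm)$ I would instead set $\wFun_{n+1}'(v) = t - \wFun_{n+1}(v)$, so that the same nonnegativity inequality now encodes $\usw(\pi) \leq t$. For \np-hardness I reduce from \enash for \MP games with \GRone specifications, which is \npC by Theorem~\ref{thm:enashmpgrone}, choosing $t$ so that the social-welfare constraint becomes vacuous: taking $t = \min\{\wFun_{n+1}(s) : s \in \St\}$ forces $t \leq \usw(\pi)$ on every path, while $t = \max\{\wFun_{n+1}(s) : s \in \St\}$ forces $t \geq \usw(\pi)$ on every path. In either case the threshold instance is a yes-instance iff the underlying \enash instance is, yielding the matching lower bound.

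The main obstacle I anticipate is ensuring that the meter agent is genuinely inert: because its actions do not influence $\trnFun$, it must impose no equilibrium obligations of its own, so I have to confirm that it is used purely to read off $\usw$ through the extra linear inequality and is excluded from the secure-value and punishment-security conditions of Theorem~\ref{thm:pthfinding}. Getting this bookkeeping right---so that augmenting the game changes neither the set of Nash equilibria of the original players nor the satisfaction of $\varphi$---is the delicate point; once it is settled, the additivity of mean-payoff on ultimately periodic paths and the soundness of the extra Eq3-style constraint are routine.
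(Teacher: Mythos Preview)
Your proposal is correct and follows essentially the same route as the paper: add the inert meter agent $n+1$ with $\wFun_{n+1}=\sum_{i\in\Ag}\wFun_i$, run the \enash procedure of Theorem~\ref{thm:enashmpgrone} on the augmented multi-weighted graph with the shifted weight $\wFun_{n+1}'(v)=\wFun_{n+1}(v)-t$ (respectively $t-\wFun_{n+1}(v)$) contributing one extra Eq3-style inequality, and obtain hardness by trivialising the threshold via $t=\min_s \wFun_{n+1}(s)$ (respectively $\max_s \wFun_{n+1}(s)$). Your explicit caveats---that the meter agent must be excluded from the punishment/secure-value analysis and that additivity of $\MP$ relies on the ultimate periodicity of $\pi$---are exactly the bookkeeping the paper leaves implicit.
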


Now we turn our attention to the egalitarian social welfare function. To solve the social threshold problem $ t \leq \esw(\strpElm) $, we directly adapt from the procedure for solving \enash for \MP games with \GRone specifications (Theorem~\ref{thm:enashmpgrone}). For the game $\Game{[z]}$, we build the underlying graph $\tuple{V, E, (\wFun_{i}')_{i \in \Ag}}$ where $ \wFun_{i}'(v) = \wFun_{i}(s) - (\max\{z_i,t\}) $. Then we define the linear programs $\LP(\psi_l)$ and $\LP(\theta_1,\dots,\theta_n)$ in the same way. Observe that, one of the two linear programs has a solution if and only if there is a path $\pi$ satisfying $\varphi$ such that for every player $i$, $z_i \leq \pay_i(\pi) $ and $t \leq \pay_i(\pi) $. To obtain the lower bound, again, we reduce from the \enash problem for \MP games with \GRone specifications. The reduction simply follows from the fact that by fixing $t = \min \{ \wFun_{i}(s) : i \in \Ag, s \in \St \}$, we can encode \enash problem into the social threshold problem. The case $ t \geq \esw(\strpElm)$ is similar. Therefore, we obtain the following result.

\begin{theorem}
	For a given \MP game $\Game_{\MP} = \tuple{A, (\wFun_{i})_{i \in \Ag}}$, a \GRone specification $\varphi$, and a threshold value $t$, deciding whether there exists a strategy profile $\strpElm \in \enash_{\Game}(\varphi)$ such that $t \leq \esw(\strpElm)$ is \npC.
	Analogously, deciding whether there exists a strategy profile $\strpElm \in \enash_{\Game}(\varphi)$ such that $t \geq \esw(\strpElm)$ is \npC.
\end{theorem}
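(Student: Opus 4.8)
The plan is to obtain both bounds by reusing, almost verbatim, the linear-programming machinery developed for Theorem~\ref{thm:enashmpgrone}, where \enash for \MP games with \GRone specifications was shown to be \npC. The only genuinely new ingredient is a more aggressive weight shift that simultaneously encodes the secure-value requirement of Nash equilibrium and the egalitarian threshold.

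For the upper bound in the case $t \leq \esw(\strpElm)$, I would first note that $t \leq \esw(\strpElm) = \min_{i \in \Ag}\{\pay_i(\strpElm)\}$ is equivalent to the conjunction $t \leq \pay_i(\strpElm)$ over all players $i$. By the characterisation of Theorem~\ref{thm:pthfinding}, a witnessing equilibrium with outcome $\pi$ comes with a punishment vector $\vec{z}$ satisfying $z_i \leq \pay_i(\pi)$ for every $i$; combining the two requirements, I want a path $\pi$ with $\max\{z_i, t\} \leq \pay_i(\pi)$ for all $i$. The key move is therefore to guess $\vec{z}$ and build $\Game{[z]}$ exactly as before, but to define the shifted weights by $\wFun_i'(v) = \wFun_i(v) - \max\{z_i, t\}$ instead of $\wFun_i(v) - z_i$. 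The linear programs $\LP(\psi_l)$ and $\LP(\theta_1, \dots, \theta_n)$ are then written unchanged, so that the positivity constraint Eq3 now forces the recovered cycle to have mean-payoff at least $\max\{z_i, t\}$ for every $i$ at once. As in Theorem~\ref{thm:enashmpgrone}, one of the two programs is feasible if and only if there is an ultimately periodic path satisfying $\varphi$ with $\pay_i(\pi) \geq \max\{z_i, t\}$ for all $i$, which by Theorem~\ref{thm:pthfinding} yields a Nash equilibrium meeting the egalitarian threshold. Since the values $\pun_i(s)$ are guessable and checkable in \np and the programs have size polynomial in $\Game$ and $\varphi$, the overall procedure runs in \np. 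The symmetric case $t \geq \esw(\strpElm)$ is handled analogously, shifting in the opposite direction.

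For the lower bound, I would reduce directly from \enash for \MP games with \GRone specifications, which is \npC by Theorem~\ref{thm:enashmpgrone}. The observation is that fixing $t = \min\{\wFun_i(s) : i \in \Ag,\, s \in \St\}$ makes the egalitarian constraint vacuous: every mean-payoff value is bounded below by the smallest weight, so $\pay_i(\pi) \geq t$ holds along every path, and hence $t \leq \esw(\strpElm)$ is automatically satisfied for any profile. With this choice of $t$, the question of whether some $\strpElm \in \enash_{\Game}(\varphi)$ has $t \leq \esw(\strpElm)$ collapses to plain \enash, giving the hardness; the dual case $t \geq \esw(\strpElm)$ is treated by fixing $t$ to the largest weight instead. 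I expect the main obstacle to be verifying that the single shift by $\max\{z_i, t\}$ cleanly encodes both the secure-value condition and the threshold without disturbing the \GRone acceptance bookkeeping carried by Eq4 and Eq5; once one checks that Eq3 is the only shift-sensitive constraint and that it decouples across players, the argument mirrors Theorem~\ref{thm:enashmpgrone} step for step.
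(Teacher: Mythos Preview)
Your proposal is correct and matches the paper's own argument essentially line for line: the paper also shifts the weights in $\Game{[z]}$ by $\max\{z_i,t\}$ before running the linear programs of Theorem~\ref{thm:enashmpgrone}, and obtains hardness by fixing $t$ to the minimum (respectively maximum) weight so that the egalitarian constraint becomes vacuous. There is nothing to add.
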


The threshold social welfare calculation can be used to approximate the \MaxENash and \MinENash values of a game, be it either utilitarian or egalitarian.
Note that, for every agent $i \in \Ag$ and every strategy profile $\strpElm$ in the game, it holds that 

\[
\min(\wFun_i) = \min_{s \in \St}\{\wFun_{i}(s)\} \leq
\pay_{i}(\strpElm) \leq
\max_{s \in \St}\{\wFun_{i}(s)\} = \max(\wFun_{i})\text{.}
\]
\noindent This establishes a bound also on the social welfare function, which is given by 

\[
\sum_{i\in \Ag} \min(\wFun_i) \leq
\MinENash_{\sw}(\Game, \varphi) \leq
\MaxENash_{\sw}(\Game, \varphi) \leq
\sum_{i\in \Ag} \max(\wFun_i)\text{.}
\]

Moreover, observe that, for two values $t < t'$, if $(\Game, \varphi)$ is $t$-increase but not $t'$-increase, then it holds that $t \leq \MaxENash_{\sw}(\Game, \varphi) < t'$.
Analogously, if $(\Game, \varphi)$ is $t'$-decrease, but not $t$-decrease, then it holds that $t \leq \MinENash_{\sw}(\Game, \varphi) < t'$.

These observations allow to apply a bisection-like method to approximate $\MaxENash$ and $\MinENash$.
Moreover, note that at each iteration of the method, the absolute error is halved, which ensures linear convergence of the method~\cite{Sik82}.
Particularly, we obtain an approximation of the values within a fixed tolerance $\epsilon > 0$ in a number $n$ of iterations bounded by $n_{\epsilon} = \lceil \log_{2}(\frac{b - a}{\epsilon}) \rceil$, with $a = \sum_{i\in \Ag} \min(\wFun_i)$ and $b = \sum_{i\in \Ag} \max(\wFun_i)$.



\section{Other Rational Verification Problems}\label{secn:more}
\enash is, we believe, the most fundamental problem in the rational
verification framework, but it is not the only one. The two other key
problems are \anash and \rvnonemptiness. The former is the dual
problem of \enash, which asks, given a game $\Game$ and a
specification $\phi$, whether $\phi$ is satisfied in {\em all\/} Nash
equilibria of $\Game$. The latter simply asks whether the game
$\Game$ has at least one Nash equilibrium, and it can be thought of as the special case of \enash
where the specification $\phi$ is any tautology.

We can conclude from (the proofs of) the results presented so far, which are summarised in Table~\ref{tab:results}, that while \anash for \GRone games is also \pspace and \fpt, respectively, in case of \LTL and \GRone specifications, for \MP games the problem is, respectively, \pspace and co\np, in each case.
In addition, we can also conclude that whereas \rvnonemptiness for \GRone games is \fpt, for \MP games is \npC.
These results contrast with those when players' goals are general \LTL formulae, where all problems are 2\exptimeC since \LTL synthesis, which is 2\exptime-hard~\cite{PnueliR89}, can be encoded.
These results also contrast with those presented in~\cite{GaoGW17}, where it is shown that, in succinct model representations given by iterated Boolean games or reactive modules, all problems in the rational verification framework can be polynomially reduced to \rvnonemptiness, which clearly cannot be the case here, unless the whole polynomial hierarchy collapses. 



\section{Concluding Remarks}
\label{sec:conc}

We have presented improved complexity results for rational verification problems in three different settings: in the analysis of response properties of reactive systems modelled as multiagent systems; verification of mean-payoff games; and verification of collective properties of multiagent systems through the analysis of social welfare properties. The first scenario mostly concerns the verification of qualitative properties of reactive systems; the second the verification of quantitative properties; and the third the verification of ``community'' properties, as opposed to individual properties of agents in a system. In the remainder of this article, we discuss further the impact and relevance of our results in these three areas. 

\paragraph{\bf Reactive systems} The logical analysis of reactive systems is typically carried out using either linear temporal logics, such as \LTL, or branching time temporal logics, such as \CTL\ and \CTLstar. Such analysis may involve verifying that a temporal logic property holds in a given system (model checking) or automatically constructing the system from a temporal logic specification (automated synthesis). Rational verification subsumes both problems, and applies to systems modelled in a distributed way as a collection of semi-autonomous agents (a multiagent system). Despite the greater scope of rational verification with respect to both model checking and automated synthesis, previous work has shown that the overall complexity of rational verification is typically not higher/worse than the combined complexity of the associated synthesis problem. This connection also transfers when considering goals expressed in the \GRone fragment of LTL, where an initial solution in 2\exptime is reduced to complexities lying in the polynomial hierarchy. However, to do so, careful attention must be paid to how the additional game-theoretic analysis that rational verification entails must be done without blowing up the combined computational complexity. This is particularly important since, in rational verification, strategies for multiple agents must be synthesised, rather than a single model for a reactive system.

\paragraph{\bf Mean-Payoff games} In the computer science literature, mean-payoff games have been considered as a way of understanding the long-term behaviour (the average performance) of a system---the most common setting is that of a two-player game in which one of the players model the system and the other player models the environment. From a game-theoretic point of view, these are two-player games, which in a perfect information setting can be solved in NP${}\cup{}$coNP, thus without a known polynomial time algorithm to solve them. In case of rational verification with mean-payoff objectives, the problem is definitely harder, (unless P=NP, which is unlikely). We have shown that if the principal has an LTL goal, the problem matches the complexity of \LTL\ model checking, a complexity gap that cannot be avoided since \LTL\ model checking is a particular case. But, even with \GRone specifications, the problem is very likely to be strictly harder than solving (two-player perfect-information) mean-payoff games since we have shown that with mean-payoff objectives the problem is \np-Complete.

\paragraph{\bf Social Welfare} While rational verification tends to privilege the preferences of individual agents in a system, social welfare measures focus, instead, on what is considered to be best for a society of agents. Because of this, our results regarding social welfare outcomes may complement nicely the analysis performed in rational verification as originally defined, where the perfromance of society as a whole was irrelevant. We have shown that even in this scenario, better complexity results can be achieved with respect to the complexity of the problem when only individual preferences are considered, as in a Nash equilibrium. In the specific scenario that we considered in the paper, we have shown that the problem is \pspace-complete, and therefore still efficient with respect to the space complexity of the problem.

\paragraph{\bf Future Work} A limitation in adopting widely the use of rational verification instead of other reasoning techniques is its combined complexity, which is closely related to the complexity of associated automated synthesis problems.
Our results are important because they show that for several significant settings, rational verification can be done with polynomial space algorithms.
These results are much more attractive than in the general case, and hold out the hope of efficient practical tools (\emph{c.f.}\ the \emph{Equilibrium Verification Environment} (EVE)~\cite{GutierrezNPW18,GutierrezNPW20}, a tool for the automated analysis of temporal equilibrium properties).
Further practical implementations thus seem to be a natural step forward towards the deployment of rational verification in more realistic scenarios. 



\section*{Acknowledgements}
Wooldridge gratefully acknowledges the support of the ERC under Advanced Grant 291528 (``RACE''), and the support of the Alan Turing Institute in London.  
Najib acknowledges the support of ERC Starting Grant 759969 (AV-SMP).
Perelli acknowledges the support of the ERC project ``WhiteMech'' (grant agreement No 834228) and the EU ICT-48 2020 project TAILOR (No. 952215).

%
%

\bibliographystyle{spmpsci}      
\bibliography{references}   


\end{document}